\documentclass[12pt]{article}
\usepackage{amsmath,amssymb}
\usepackage{graphicx,psfrag,epsf}
\usepackage{enumerate}
\usepackage{natbib}
\usepackage{color,xcolor}
\graphicspath{ {Images/} }
\usepackage{booktabs}
\usepackage[ruled,vlined]{algorithm2e}
\usepackage[english]{babel}
\usepackage{subcaption}
\usepackage{amsthm}

\newcommand{\dd}{\mathrm{d}}
\newcommand{\ta}{\theta}
\newcommand{\EE}{{\mathbb E}}
\newcommand{\R}{\mathbb R}
\newcommand{\ld}{\lambda}
\newcommand{\pl}{D}

\newcommand{\blind}{0}

\def\cM{ {\mathcal M} }

\def\rg{{\rm g}}

\newcommand{\cov}{\mbox{Cov}}

\newcommand{\T}{{\mbox{\scriptsize \sf T}}}

\newtheorem{definition}{Definition}

\newtheorem{corollary}{Corollary}
\newtheorem{proposition}{Proposition}
\newtheorem{remark}{Remark}

\begin{document}

\def\spacingset#1{\renewcommand{\baselinestretch}%
{#1}\small\normalsize} \spacingset{1}

\if0\blind
{
  \title{\bf $K$-functions for point processes on complex surfaces}
  \author{Francisco Cuevas-Pacheco\hspace{.2cm}\\
    Departamento de Matemática, Universidad T{\'e}cnica Federico Santa Mar{\'i}a\\
    and \\
    Scott Ward\\
    Department of Mathematics, Imperial College London \\
    and \\
    Ed Cohen\\
    Department of Mathematics, Imperial College London \\
   and\\
    Rasmus Waagepetersen\\
    Department of Mathematical Sciences, Aalborg University \\
    }
  \maketitle
} \fi

\bigskip
\begin{abstract}
The $K$-function is a fundamental summary statistic for assessing
clustering or regularity of point processes in two or three
dimensional Euclidean
space. In practice, however, many planar point patterns arise
from projecting locations of objects on  a surface in three
dimensional space to two dimensional space.  For example, when events or
objects occur in a landscape their elevation is often ignored. This can lead to erroneous conclusions regarding properties of the point
process generating the point pattern.

There is not a unique way to
extend the classical $K$-function to point patterns on a complex surface. In this paper we propose, explore, and discuss
several approaches in terms of their theoretical and
computational properties. The best performing approach, coined the surface area $K$-function, can be viewed is an
analogue of the classical $K$-function replacing
counts of points in Euclidean balls with counts of points in surface geodesic
balls. However, an important distinction is that the argument of our surface area
$K$-function is area instead of radius of geodesic balls. The
  performances of the various surface $K$-functions are
  compared in applications
to simulated and real data.
\end{abstract}

\noindent%
{\it Keywords:} K-function, Point processes, Surfaces, Geodesic distance. 

\spacingset{1.45}

\section{Introduction}

The main bulk of statistical methodology for point processes is
confined to point processes in $d$-dimensional Euclidean space,
$\R^d$, $d \ge 1$. Recent work has, however, considered point
processes on a variety of non-Euclidean spaces, including spheres
\citep{robeson2014point,moeller2016functional,cuevas2018log}, convex
manifolds
\citep{cuevas2018log,ward2021testing,begu2024nonparametric,ward2025multitype,clemente2026nonparametric}, linear networks \citep{ang2012geometrically,baddeley2021analysing} and arbitrary smooth manifolds for intensity estimation \citep{ward2023nonparametric}. These examples show that the geometry of the domain is intrinsic to the statistical analysis, as it determines the relevant notions of distance, neighbourhood and scale. In this paper we are mainly interested in point processes on surfaces representing the topography of a geographical region, that is, surfaces that can be viewed as graphs of functions mapping $\R^2$ into $\R$. We also outline how our methods can be extended to more general surfaces. 

As a motivating example, we consider point patterns of locations of trees in the Sinharaja plot. The Sinharaja topography shown in the left panel of Figure~\ref{fig:sinharaja} is far from flat, featuring a valley and a maximal elevation difference of $150{\cdot}32$ m over the $500 \times 500$ m plot. Common practice when studying clustering of plant species in such a plot is nevertheless to ignore the third dimension and simply consider the planar point pattern of the two dimensional coordinates of the trees, as in the right panel of Fig.~\ref{fig:sinharaja}.

\begin{figure}
\centerline{%
\includegraphics[width=0.48\textwidth]{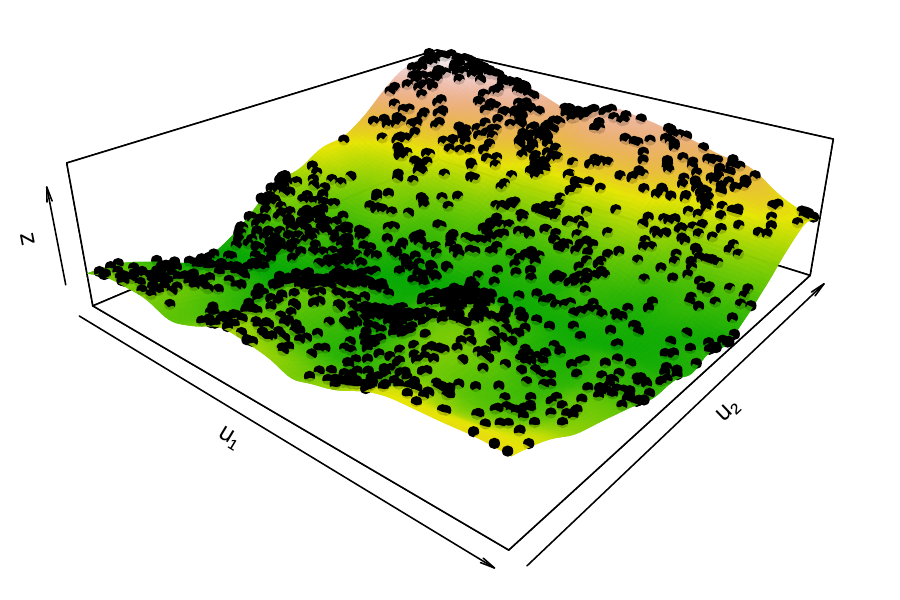}%
\hfill
\includegraphics[width=0.44\textwidth]{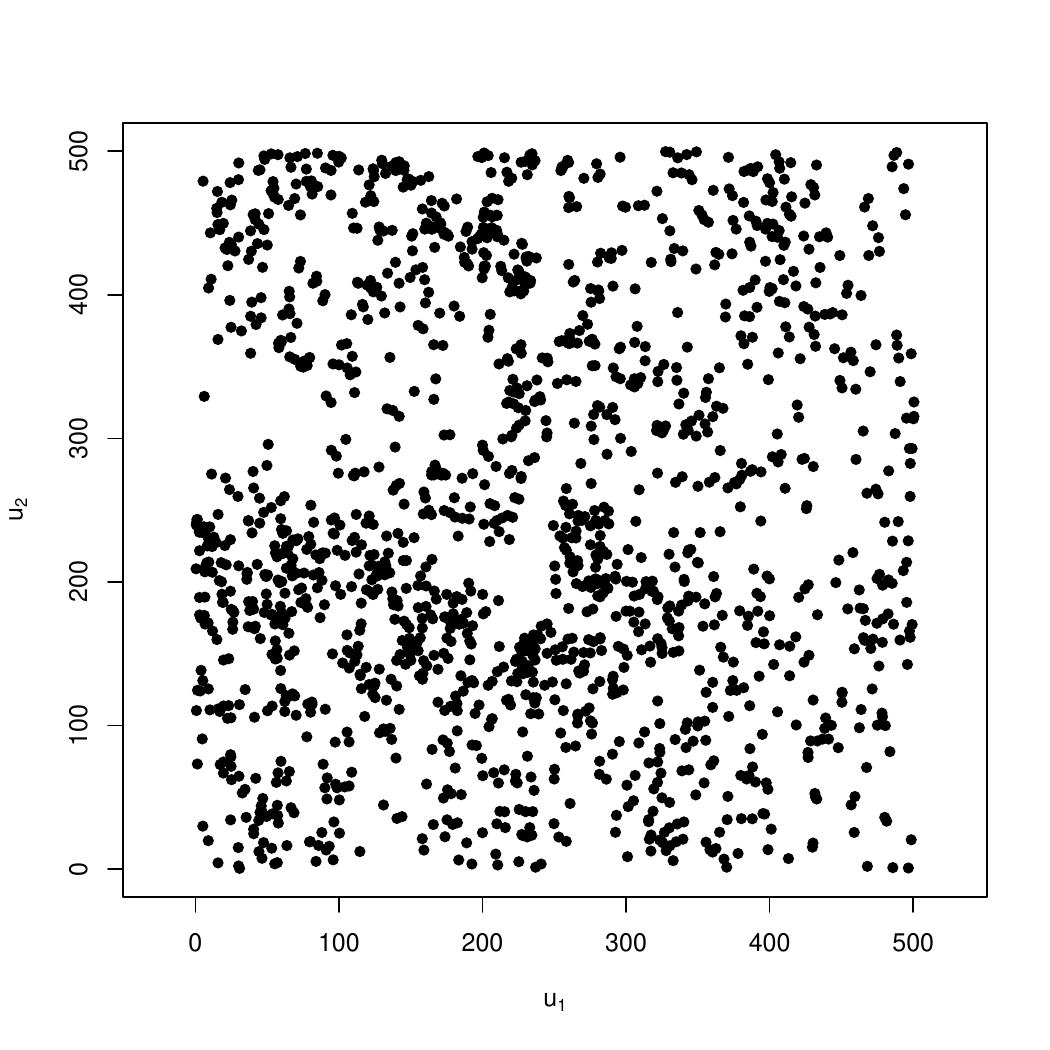}}
\caption{Elevation over the Sinharaja plot with locations of \textit{Anisophyllea cinnamomoides} imposed (left) and two dimensional coordinates of the same locations (right).}
\label{fig:sinharaja}
\end{figure}

\citet{mcdowall2017importance} emphasised that naively analysing spatial patterns obtained by removing the third coordinate may lead to bias in estimating relations between the intensity of points and explanatory variables, and suggested an appropriate adjustment of the intensity function of the projected point process. The projection of points on a surface could also lead to spurious second-order inferences in the projected process. Our work is focused on this latter issue, where applications of the usual $K$-function for planar point processes can lead to misleading conclusions regarding clustering or inhibition properties of the original point process on the three dimensional surface. We therefore propose and compare various surface analogues of the $K$-function that cater for the specificities of the particular surface on which the points occur. In doing so, we clarify which aspects of the Euclidean formulation reflect general principles and which depend specifically on planar geometry, thereby explaining both why naive application of planar methods to projected data can fail and why the proposed surface analogues have different inferential interpretations.

Importantly, these analogues all agree with the classical $K$-function in the case of a flat surface, but on curved surfaces with non constant curvature they differ in geometric meaning as well as computational difficulty. In particular, our comparisons show that the surface area $K$-function provides a natural formulation in which the argument is neighbourhood area rather than radius. This suggests that the classical $K$-function may be viewed most naturally as a function of area, with radius being an equivalent parameterisation only for flat Euclidean geometry and constant curvature surfaces such as the sphere. Code is available as an
  \texttt{R} package at \verb+https://github.com/FcoCuevas87/PointProcess_Surfaces+.

\section{Background and notation for point processes on surfaces}

\subsection{Surfaces and geodesics}\label{sec:surfgeod}

For specificity and ease of presentation we consider parametric surfaces of intrinsic dimension two embedded in three-dimensional Euclidian space. Thus, we define a surface $S \subset \R^3$ as 
\[ S=\{ s \in \mathbb{R}^{3}: s = \varphi(u) \mid u \in D \} \]
where $D \subseteq \R^2$ is an open subset of the plane, $\varphi : \mathbb{R}^{2} \mapsto \mathbb{R}^{3}$ is an injective vector valued function $\varphi(u) = (\varphi_{1}(u), \varphi_{2}(u), \varphi_{3}(u))$, and $u = (u_{1}, u_{2})$ is of dimension two. 

Assume that for almost all the locations the derivative
$\partial_{u_{i}} \varphi(u)=
(\partial_{u_{i}}\varphi_{j}(u))_{j=1}^3$ exists. Then, for all
locations $s=\varphi(u) \in S$, the normal vector $n(u)$ exists. For
the  surfaces considered in this paper, the normal vector is given by $n(u):= \partial_{u_{1}}\varphi(u) \times \partial_{u_{2}}\varphi(u)$. Let $A$ be a subset of $S$ and let $B$ be the preimage of $A$ under $\varphi$. Then, the surface area of a set $A \subseteq S$ is 
\[ \nu(A)= \int_{A} \nu(\dd s) = \int_{B} \| n(u) \| \dd u.\]

Let $s$ and $t$ be two points in $S$ such that $s = \varphi(u)$ and $t = \varphi(v)$ with $u,v \in D$. We define the geodesic distance between $s$ and $t$ as
\[ d_{S}(s,t) = \inf_{\gamma \in \Gamma} \int_{0}^{1}  \| \gamma^{\prime}(\tau) \| {\rm{d}\tau}. \]
Here $\Gamma : \{ \gamma(\tau): [0,1]
\rightarrow S \mid \gamma(\tau) = \varphi(\ell(\tau)), \ell:
  [0,1] \rightarrow D\text{ differentiable}, \ell(0)=u, \ell(1) = v\}$ is the set of smooth
curves that connect $s$ and $t$. Further, $\| \gamma'(\tau) \| =
\sqrt{\ell'(\tau)^\top G( \ell(\tau)) \ell'(\tau)}$ where the $ij$th
entry in the $2 \times 2$ matrix  $G(u)$ is $\langle \partial_{u_{i}}
\varphi(u), \partial_{u_{j}} \varphi(u) \rangle = \sum_{k = 1}^{3} \partial_{u_{i}}
\varphi_{k}(u) \partial_{u_{j}}
\varphi_{k}(u)$, for 
$i,j=1,2$.

We define the geodesic ball $b_{S}(s,r)$ with center $s$ and radius $r$ as the set of points
$t$ in $S$ such that $d_{S}(s,t) \leq r$. Let $a_{s}(r) = \nu(b_{S}(s,r))$ be the area of the geodesic ball with
center $s$ and radius $r$. Except when $S$ is a hyper plane, $a_{s}(r)$ in general differs from $\pi r^2$. For any point $s$ in $S$,
$a_{s}(r)$ is a smooth function of $r$ such that $\dd a_{s}(r)/\dd r > 0$. Then for each $s \in S$ and fixed value of area $a \ge 0$ there exists a radius $r^{*} := r^{*}(s,a)$ such that $a_{s}(r^{*}) = a$. We can therefore also parametrize geodesic balls in terms of their area and denote by $b_s(a)=b_S(s,r^*(s,a))$ the geodesic ball centered at $s$ and of area $a$.

The framework above obviously covers topographic surfaces where
$\varphi(u)=(u_{1},u_{2},z(u_{1},u_{2}))$ and $z:D \rightarrow \R$ is
differentiable. However, other surfaces are covered too, such as
ellipsoids with semi-axes $A$, $B$, and $C$, for which $\varphi(u) = (A \cos(u_{1}) \sin(u_{2}), B \sin(u_{1}) \sin(u_{2}),
  C\cos(u_{2})),$ where $(u_{1}, u_{2}) \in D= [0,
\pi] \times [0, 2\pi]$. The sphere of radius $R$ is the special case $A=B=C=R$.

\subsection{Point processes on surfaces}

We denote by $X$ a point process on a surface $S$ equipped with geodesic distance as defined in the previous section. That is, $X$ is a random locally finite subset of $S$, meaning that for any $A \subseteq S$ bounded with respect to geodesic distance, the cardinality $N(A)$ of $X \cap A$ is finite almost surely. A Poisson process $X$ on $S$ is characterized by the counts $N(A)$ being Poisson distributed (which implies that counts in disjoint subsets of $S$ are independent).

If functions $\ld: S \rightarrow [0,\infty[$ and $g: S \times S \rightarrow [0,\infty[$ exist such that for non-negative functions $h:S \rightarrow [0,\infty[$ or $h: S \times S \rightarrow [0,\infty[$, 
\[ \EE \sum_{u \in X}h(u) = \int_Sh(u)\ld(u) \nu (\dd u) \]
and
\[ \EE \sum_{u,v \in X}^{\neq} h(u,v) = \int_{S}\int_{S}h(u,v) \ld(u)\ld(v) g(u,v) \nu(\dd u) \nu(\dd v) \]
(the so-called first and second order Campbell formulas) then $\ld$
and $g$ are called the intensity function and the pair correlation
function of $X$, respectively. If a regression model is proposed for
$\ld$ this can easily be fitted
using the Poisson likelihood score or a logistic regression estimating
function (Appendix~\ref{sec:estimationintensity}).

Let $\varphi^{-1}: S \mapsto D$ denote the inverse of $\varphi$. For example, in case of the topographical surface, $\varphi^{-1}$ is the function that removes the third coordinate of the point process $X$. Then we obtain a point process on $D$ as the inverse mapping $X_\pl=\varphi^{-1}(X)$ which has intensity function $\ld_\pl(u)=\lambda( \varphi(u) ) \|n(u)\|$ and pair correlation function $g_\pl(u, v)= g( \varphi(u), \varphi(v) )$.
 If $X$ is a Poisson process on $S$, then by the mapping theorem \citep{kingman1993poisson}, $X_D$ becomes a Poisson process on $D$.
 
\section{$K$-functions for a point process on a surface}\label{sec:surfaceK}

The $K$-function, originally defined for point processes on Euclidean space \citep{ripley1977modelling,baddeley2000non}, is a fundamental tool for assessing deviations from
the Poisson process reference model. Consider a point process $Y$ on
$\R^2$ with intensity function $\rho(\cdot)$. Then, the $K$-function for  $Y$ is defined by 
\begin{equation}\label{eq:Kstandard} K(r)  = \frac{1}{|B|} \EE \left [
    \sum_{u \in Y \cap B,v \in
    Y\setminus u } \frac{1[\| u-v \|< r]}{\rho(u) \rho(v)} \right]\end{equation}
for any $B \subset \R^2$ of finite
positive area $|B|>0$, {\em provided} the right hand side does not depend on
the choice of $B$. If the latter requirement is fulfilled, $Y$
is said to be second order intensity reweighted stationary (SOIRS)
\citep{baddeley2000non}. A sufficient condition for SOIRS is that the
pair correlation function for $Y$ is translation invariant, $g(u,v)=g(0,v-u)$ in which case
\[ K(r)=\int_{\|v\|\le r} g(0,v-u) \dd v \]
for any $u \in \R^2$. In the case of constant intensity $\rho(u)=c$, $u \in \R^2$, $c K(r)$ is the expected number of further points from $Y$ within distance $r$ from $u$ given that $Y$ already has a point at $u$. This explains why the $K$-function is a useful summary of clustering of $Y$.

Extending the usual $K$-function to point processes on surfaces is not
straightforward since concepts like stationarity or translation
invariance are only available for special cases of non-flat surfaces like for
example a sphere where one can consider invariance under
rotations. Rather than attempting to extend the notion of SOIRS to
general surfaces we initially pursue a more modest goal of defining empirical analogues of
the $K$-function that make intuitive sense as measures of clustering
or inhibition and have a known form under the important null hypothesis of an
inhomogeneous Poisson process. Comments on extensions to non-Poisson point processes are provided in Section~\ref{sec:non-Poisson}.

Specifically, we consider three alternative $K$-functions that we coin: surface corrected $K$-function, tangent plane $K$-function, and surface area $K$-function. The tangent plane $K$-function turns out to be computationally very demanding but is conceptually a natural stepping stone between the two other alternatives. For completeness, we include a brief account of the tangent plane $K$-function here and provide further details in the supplementary material. We do not, however, consider it in the simulation study or data examples.

\subsection{Surface corrected  $K$-function}\label{sec:corrected}

In case of a three-dimensional surface that can be
bijectively mapped to a sphere, \cite{ward2021testing} explored  that Poisson processes
that are transformed by a bijective mapping remain Poisson. Hence
they transformed a point process on a three-dimensional surface to a
point process on a sphere which enabled them to take advantage of the
inhomogeneous $K$-function for spheres \citep{moeller2016functional} to test the null-hypothesis that
the observed point pattern originates from a Poisson process.

A similar approach can be used in our setting. Suppose we observe $X$ inside $S_W=\varphi(W)$ for some $W \subseteq D$. Then, we can simply apply the standard planar $K$-function to $X_{\pl}$ to obtain a `surface corrected' planar $K$-function 
\begin{equation}\label{eq:Ksc} \hat K_{\text{sc}}(r)=  \frac{1}{|W|} \sum_{u \in X_\pl \cap W}\sum_{v \in X_\pl
      \setminus u} \frac{1[ \|u-v\| \le r]}{\lambda_{\pl}(u)\lambda_{\pl}(v)}
\end{equation}
which has known expected value $\EE \hat K_{\text{sc}}(r)=\pi r^2$ when $X$ and hence $X_D$ are Poisson processes \citep{baddeley2000non}. In practice edge corrections are introduced to address the issue that  $v$ in $X_D$ outside $W$ are not observed.

As previously mentioned it is common practice in case of topographic surfaces to apply the usual
$K$-function to $X_{\pl}$ but without applying the correction of the
intensity function given by the product of surface area elements
$\|n(\varphi(u))\| \|n(\varphi(v))\|$. That is, an intensity function is applied that may be
systematically biased due to the lack of surface adjustment. This can,
as demonstrated in the simulation study (Section~\ref{sec:simulation}),
lead to false conclusions regarding clustering properties of $X$.

An issue with applying the surface corrected planar $K$-function \eqref{eq:Ksc} is that the images in $S$ of discs in the
plane may take very different and sometimes non-appealing forms on the
surface. For example, considering a topographic surface, a disc could be transformed back into a very elongated
ellipsoid if the surface is steep. Such a neighbourhood may not seem
reasonable for the original point process $X$ due to the strong anisotropy that may not have any
interpretation in relation to the point process. This problem can be
mitigated by projecting the points in $X$ to tangent planes of $S$
 as detailed in the next section.

\subsection{Tangent plane $K$-function}\label{sec:tangentplane}

Consider a point $s$ on the surface, and let $T_s$ denote the tangent
plane of $S$ at $s$. Assume that $h_{ST}$ is a mapping (e.g.\
projection) from $S$ to $T_s$. We restrict the domain of $h_{ST}$ to a
connected neighbourhood $N_s \subset S_W$ of $s$ so that $h_{ST}$
becomes injective as a mapping from $N_s$ to $T_s$. We choose $N_s$ to
be the maximal such neighbourhood. The injectivity of $h_{ST}$ on
$N_s$ is crucial for derivations in the supplementary material.

We then let $X_{T_s} \subset A_{T_s}= h_{ST}(N_s)$ be the point process of points $h_{ST}(t)$, $t \in X \cap N_s$ and let $b_T(s,r) \subset T_s$ be the disc with
centre $s$ and radius $r$ contained in $T_s$.  Moreover, denote by
$\ld_{T_s}$ the intensity function of $X_{T_s}$ (derived in supplementary
Section~\ref{supp-sec:tangentplaneintensity}).
 We then define a tangent plane $K$-function as 
\begin{equation}\label{eq:tangentK} \hat K_{\text{tangent}}(r)=\frac{1}{\nu(S_W)} \sum_{s \in X \cap
    S_W} \sum_{v \in X_{T_s}\setminus h_{ST}(s)} \frac{1[ v \in b_T(s,r)]}{\ld(s) \ld_{T_s}(v)}e_T(s,r). \end{equation}
 In this definition, $\ld(s)=\ld_{T_s}(s)$,  and $e_T(s,r)= \pi r^2/\nu( b_T(s,r) \cap A_{T_s})$
is  a correction factor. For a Poisson process, $\EE \hat
K_{\text{tangent}}(r) = \pi r^2$ (see supplementary Section~\ref{supp-sec:intro}).

A simple choice for $h_{ST}$ would be the orthonal projection of $s$
onto $T_s$. One issue here is that the backtransform $h_{ST}^{-1}(b(s,r))$ of
a ball in $T_s$ is in general not a geodesic ball of radius $r$ in
$S$. One might instead use the so-called logarithm  map whose
inverse, the exponential map,  has the appealing property of
transforming Euclidean balls of a given radius on the tangent space
into geodesic balls of the same radius on $S$ \citep{doCarmo1992}. Also in general the
injectivity radius (supplementary Section~\ref{supp-sec:intro}) of the exponential mapping, will be greater than the
injectivity radius for the orthogonal projection. For a sphere of
radius $R$ for example, the injectivity radius is $\pi R/2$ for the orthogonal
projection and $\pi R$ for the exponential mapping. Details of the tangent
$K$-function with logarithm and exponential maps are given in the
supplementary Section~\ref{supp-sec:riemannian}.

While the tangent plane $K$-function is appealing as a natural
refinement of the surface corrected $K$-function, its practical use is
hampered by computational obstacles. The orthogonal projection and its
Jacobian needed to evaluate $\ld_{T_s}$ (supplementary Section~\ref{supp-sec:jacobian}) are easy to compute but the computation of $A_{T_s}$ is difficult. In case of the logarithm mapping, the computation of $A_{T_s}$ is less of a problem but the computation of the Jacobian is very time consuming for general surfaces. This has led us to pursue in the next section a third alternative relying on geodesic balls of fixed area rather than fixed radius.

\subsection{Surface area $K$-function}\label{sec:area}

As mentioned in the previous section, the tangent plane $K$-function
using the logarithm map essentially considers neighbouring points in
geodesic balls on the surface of a given radius $r$. Indeed, it would
be tempting to directly define a $K$-function for $X$ mimicking the
$K$-function \eqref{eq:Kstandard} for Euclidean spaces, replacing
Euclidean distance with geodesic distance. However, the expected value
of this would, even in case of a Poisson process, not be independent
of the choice of $B \subset S$. This is because for a general surface,
surface areas of geodesic balls with a given radius will in general
vary depending on the location of the center of the geodesic ball. Observe, however, that although the usual $K$-function is
parametrized by a length, its value in case of a Poisson process is the
area of a ball of radius $r$. More generally, neighbourhood area
  is the natural reference
  point when assessing counts in
  neighbourhoods. This leads us to consider geodesic balls of a given
surface area rather than a given radius.

Recall from Section~\ref{sec:surfgeod} that for $s \in S$ and $a>0$,
$b_s(a)$ denotes the geodesic ball with center $s \in S$ and surface area $a$. Then, for a point pattern observed in $S_{W} \subset S$, we define 
\begin{equation}\label{eq:saK} \hat  K_{\text{area}}(a)=
  \frac{1}{\nu(S_W)} \sum_{s \in X \cap S_W}\sum_{t \in (X \setminus
    s) \cap S_W} \frac{1[ t \in b_s(a)]}{\lambda(s)\lambda(t)}e_{S}(s,a), \end{equation}
where $e_{S}(s,a) = a/\nu\left(b_s(a) \cap S_W\right)$ is an edge correction factor with $e_{S}(s,a) = 1$ if $b_s(a) \subset S_{W}$ and $e_{S}(s,a) > 1$ if $b_s(a) \cap S_{W} \neq b_{s}(a)$. With this, the area $K$ function has expectation  $\EE \hat K_{area}(a)=a$ in case of a Poisson
process. 

Although the definition of this surface area $K$-function is simple, the numerical implementation is not straightforward. However, for any location $s$ on $S$, geodesic distances to all other locations in $S$ can be determined by solving the so-called Eikonal equation, see Appendix~\ref{sec:eikonal}. In practice this is done numerically for a fine grid of locations in $S$. Having determined the geodesic distances to the grid locations, approximations of geodesic balls $b_S(s,r)$ for any radius $r$ can be obtained in terms of the grid points that lie within distance $r$ from $s$, and so, the area of $b_S(s,r)$ can be approximated numerically via grid integration. Hence, the $r^*(s,a)$ is determined for a given $a$ using a grid search.

\section{Non-Poisson case}\label{sec:non-Poisson}

All the proposed $K$-functions have known expectations given by $\pi r^2$ or area $a$ in case of a Poisson process $X$. For more general $X$ one may impose conditions on the pair correlation function ensuring that the defined $K$-functions still have expectations not depending on the particular observation window $W$ considered. As discussed in the beginning of Section~\ref{sec:surfaceK}, for a planar point process it suffices that the pair correlation function is translation invariant. This in turn means that for the surface corrected $K$-function \eqref{eq:Ksc},
\[ \EE \hat K_{\text{sc}}(r) = \int_{b(0,r)} g_{\pl}(0,v) \dd v \]
provided $g_{\pl}(u,v)=g(\varphi(u), \varphi(v))$ only depends on $v-u$ (translation invariance) or on $\|v-u\|$ (isotropy).

For a point process on the surface $S$ it may seem more natural
to impose
conditions in terms of geodesic distance and following
\cite{rakshit2017},  we may define $X$ to be $d_S$-correlated if
$g(s,t)=g_0(d_S(s,t))$ is a function of $d_S(s,t)$, for all $s,t \in
S$. In this case one can show (supplementary Section~\ref{supp-sec:riemannian}) that the tangent plane $K$-function \eqref{eq:tangentK} implemented with the logarithm mapping satisfies
\[ \EE \hat K_{\text{tangent}}(r) = 2 \pi \int_0^r g_0(\tau) \dd \tau .\]

Unfortunately it is difficult for general surfaces $S$ to construct explicit models that satisfies $g(s,t)$ is a function of $\varphi^{-1}(s)-\varphi^{-1}(t)$ or of $d_S(s,t)$. Spheres are exceptions where construction of rotation invariant models with $g(s,t)=g_0(d_S(s,t))$ are quite easy but this setting is limited with our scope of application in mind.

For the surface area $K$-function \eqref{eq:saK} we have by the Campbell formula that
\[ \EE \hat K_{\text{area}} (a) = \frac{1}{\nu(S_W)} \int_{S_W} \int_{b_s(a) \cap S_W} g(s,t) e_{S}(s,a) \dd s \dd t \]
and we would require that $ \int_{b_s(a)}g(s,t) \dd s$ does not depend on $s$. Since this depends on an interplay between the surface within $b_s(a)$ and the pair correlation function $g(s,t)$, $t \in b_s(a)$, a simple condition on the pair correlation function does not seem obvious.

\section{Simulation experiments}\label{sec:simulation}

In Sections~\ref{sec:corrected}-\ref{sec:area} we introduce three proposals for $K$-functions adapted to non-flat surfaces: surface corrected \eqref{eq:Ksc}, tangent plane \eqref{eq:tangentK}, and surface area $K$-function \eqref{eq:saK}. In practice the choice among these alternatives depend on both computational feasibility and their power for detecting deviations from complete spatial randomness (meaning that the point process $X$ is a Poisson process). It turns out that the numerical implementation of the tangent plane $K$-function is computationally very heavy due to the need for evaluating the set $A_{T_u}$. We therefore restrict attention to the surface corrected $K$-function which is computationally straightforward and the surface area $K$-function for which we could obtain a computationally efficient implementation based on the Eikonal equation. For comparison we also include the naive approach where we simply evaluate the ordinary planar $K$-function for $X_{\pl}$. For the simulation experiments we consider three types of surfaces and three different types of point process models.

\subsection{Surfaces for simulation experiments}\label{sec:surfaces}

The three surfaces considered are defined parametrically over a common domain $D= [-1,1] \times [-1,1]$. The \textit{Monkey Saddle} surface is given by
$
    \varphi(u) = \bigl(u_{1},\, u_{2},\, u_{1}^3 - 3u_{1}u_{2}^2\bigr),
$
which presents a saddle point at the origin with three upward and three downward directions, resulting in highly variable curvature. The \textit{Gaussian Bump} surface is defined as
$
    \varphi(u) = \bigl(u_{1},\, u_{2},\, 5\exp(-(u_{1}^2+u_{2}^2)/0.5)\bigr),
$
which features a smooth, symmetric elevation concentrated at the origin with rapidly decaying 
curvature toward the boundary. Finally, the \textit{Polynomial Hills} surface is given by
$
    \varphi(u) = \left(u_{1},\, u_{2},\, \frac{0.15}{0.02+(u_{1}-1)^2+0.1u_{2}^2} + 
    \frac{0.15}{0.02+(u_{1}+1)^2+0.1u_{2}^2}\right),
$
which presents two distinct peaks of equal height located symmetrically along the $u_{1}$-axis, with a saddle-like region between them. Each surface is depicted in Figure \ref{fig:simulation_open_surfaces}.
\begin{figure}
    \centering
    \includegraphics[width=0.32\linewidth]{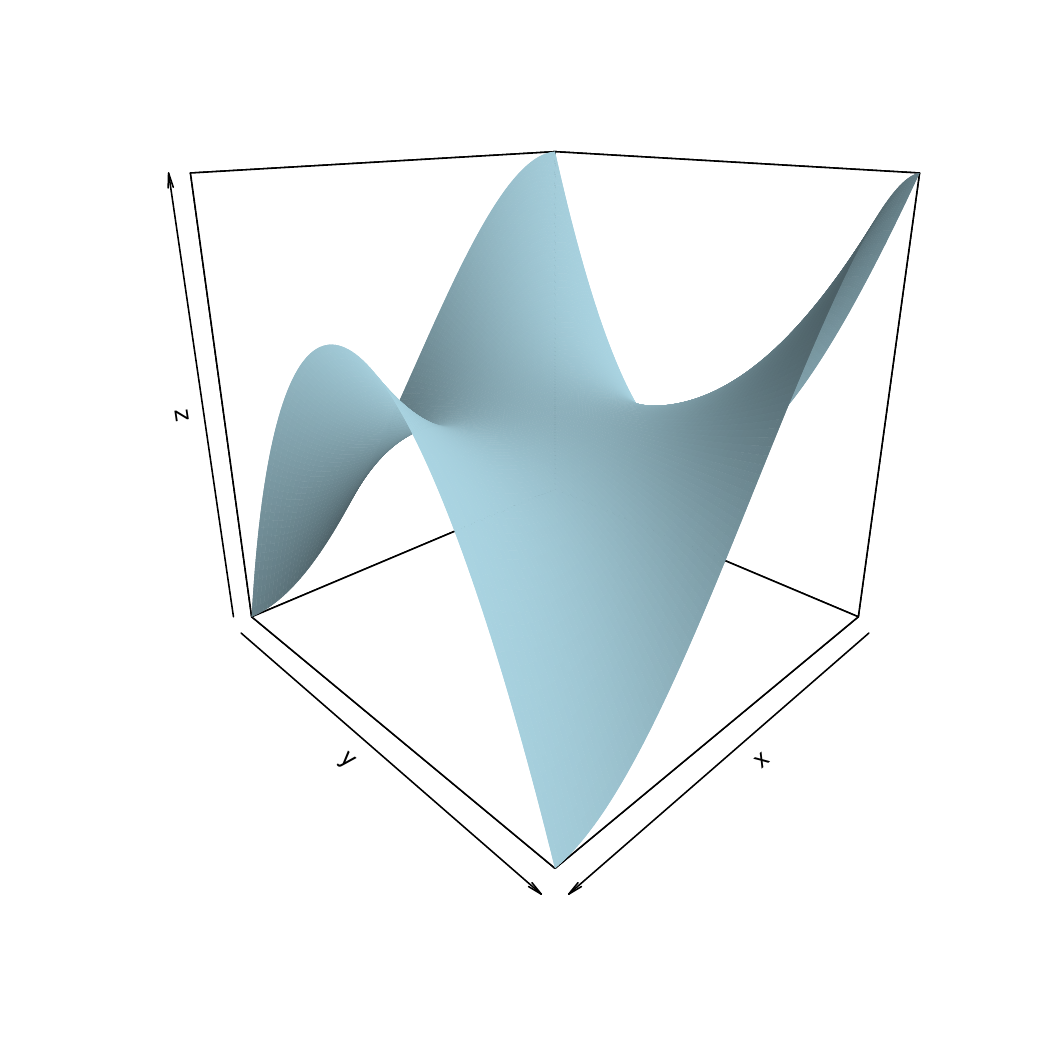}     \includegraphics[width=0.32\linewidth]{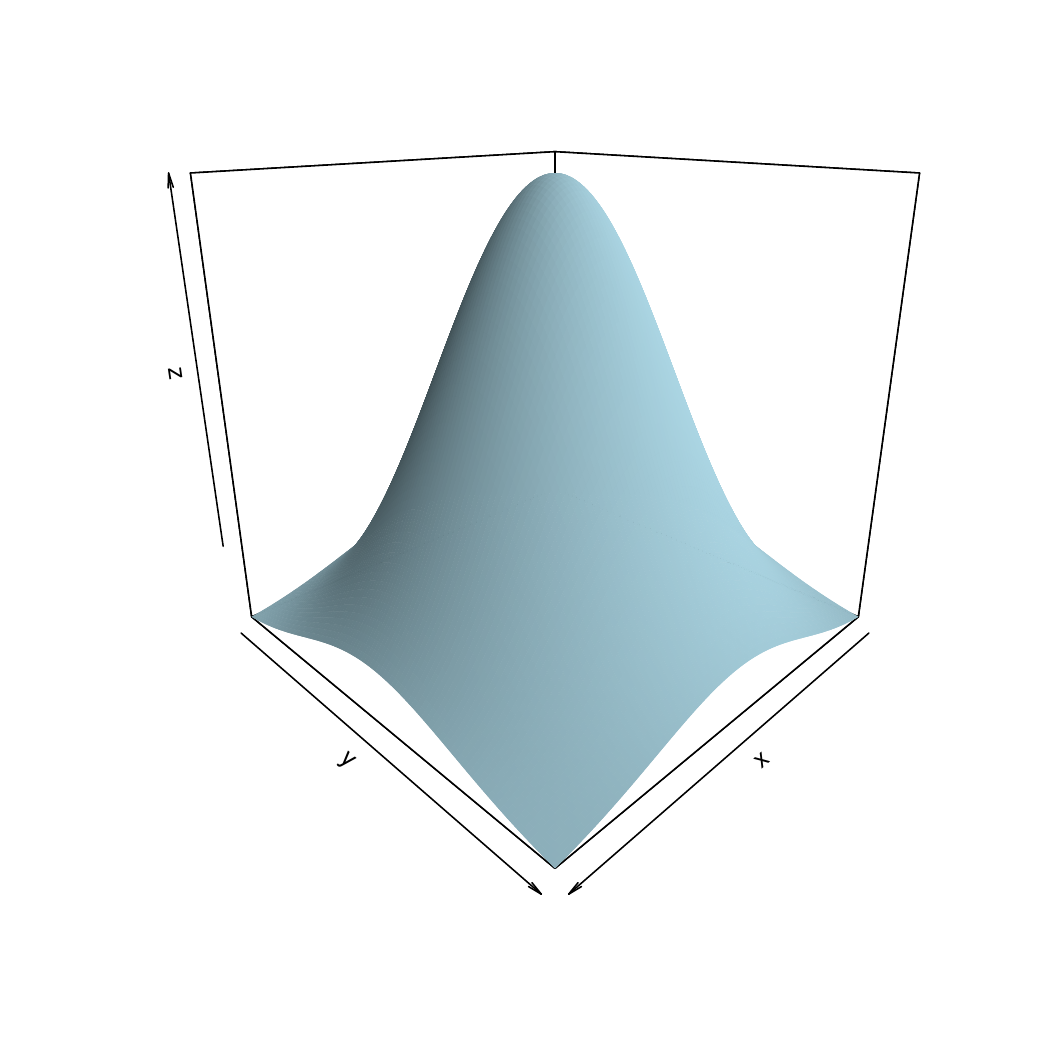}     \includegraphics[width=0.32\linewidth]{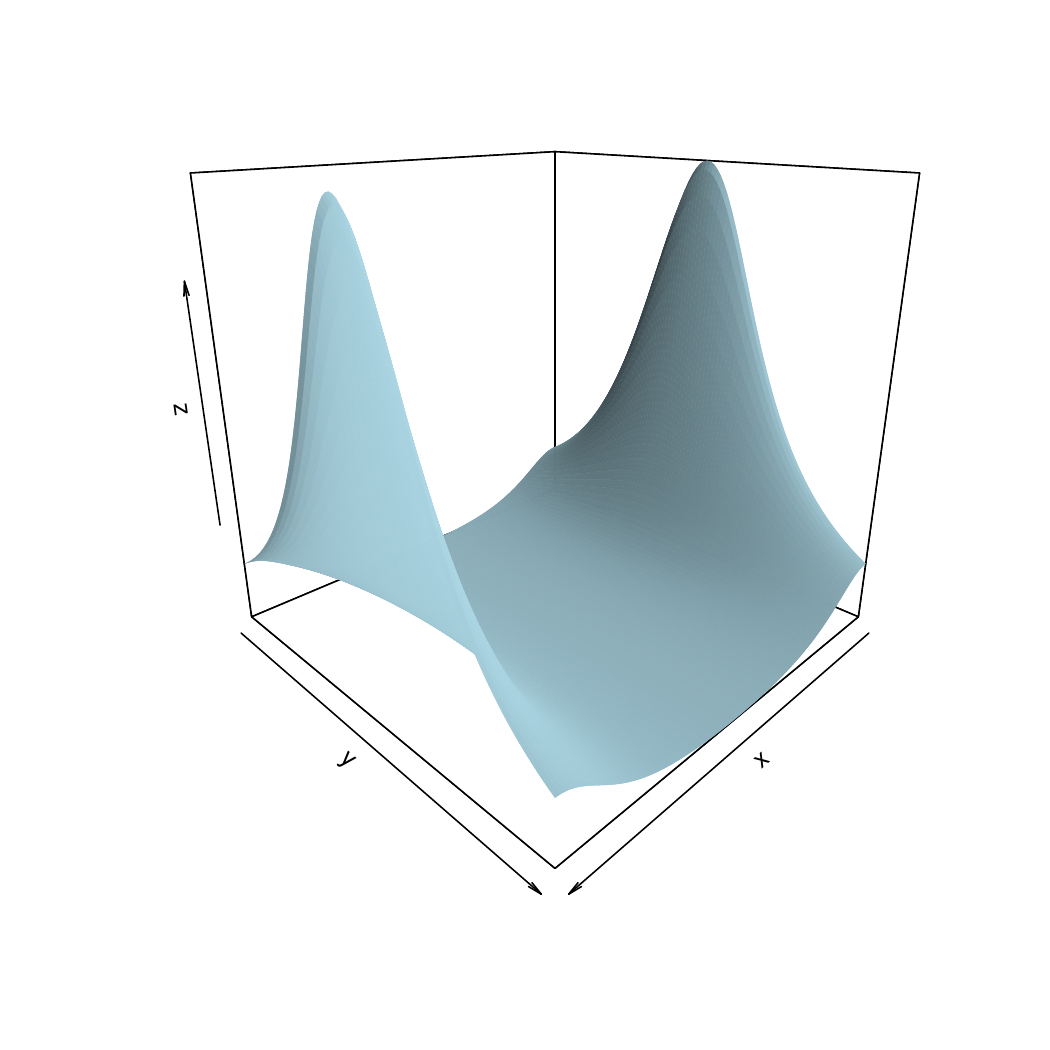}
    \caption{Plot of the different surfaces. Left: Monkey saddle, center: Gaussian bump, right: polynomial hills.}
    \label{fig:simulation_open_surfaces}
\end{figure}

\subsection{Surface point process models }\label{sec:models}

We generate point process realizations from a homogeneous Poisson
process as well as surface analogues of Matérn cluster and
  Matérn type II inhibitory point processes. The homogeneous
Poisson
process of intensity $200/\nu(S)$ is sampled by
first generating a Poisson random variable $N$ with expectation 200,
and next sampling $N$ uniformly distributed points on
$S$ using the algorithm
in \cite{diaconis2013sampling}. For the clustered point process we first
simulate a Poisson point process with intensity $\kappa = 25$, and next for
each point $s$ in the Poisson process, we simulate the
offspring via a Poisson point process with intensity $\mu = 8$ on
$b_{s}(a)$, with $a = 0.2\nu(S) $. Then, the resulting point process
of intensity approximately $200/\nu(S)$ is the union of the
offspring. Finally, to simulate inhibitory point patterns we
initially simulate a Poisson process of intensity $\lambda = 100$ together
with a uniform mark. Next, for each point $s$ of the Poisson
  point process, $s$ is retained if and only if its mark is smaller than
the marks of all its neighbours $t$ satisfying $t \in b_{s}(a)$ or  $s
\in b_{t}(a)$, with $a= 0.0025\nu(S)$. This results in a point
process with intensity $82.8$ for the monkey saddle,
$81.4$ for the Gaussian bump, and $77.2$ for the polynomial hills.

\subsection{Study of power for testing complete spatial randomness}

For each combination of surface and point process model we generate
1000 independent simulated point patterns and compute the three types
of $K$-functions for each simulated pattern. We next for each point
pattern test the complete spatial randomness (CSR) null hypothesis that the points are generated from a homogeneous Poisson process using each of the $K$-functions. This is done using the global rank envelope test (GET) \citep{myllymaki2017global} implemented in the \texttt{R} package \texttt{GET} \citep{myllymaki2024get}. This returns a $p$-value for each $K$-function. We reject CSR at the nominal level 5\% if the $p$-value is less than 5\%. For the GET test we generated 2499 simulations of a homogeneous Poisson process with intensity estimated from the simulated point pattern. The version of the GET test used for the simulation test ignores the effect of estimating the intensity which can lead to actual signifance levels deviating from the nominal level. A double sampling scheme can be used to implement a GET test correcting for this but this is too heavy computationally for the simulation study.

Table~\ref{tab:k_comparison} shows the actual levels of the global envelope tests with each of the three $K$-functions. In case of the Poisson process, the naive $K$-function is essentially useless with very high rejection rates over 87\%. The actual level for the surface corrected $K$-function is somewhat too small due to ignoring the effect of estimating the intensity for each simulated pattern. Interestingly, the results for the surface area $K$-function seem less sensitive in this respect with actual levels quite close to the nominal level for all types of surfaces.

Obviously, the naive $K$-function ignoring the surface is not reliable for testing CSR. Hence for the clustered and inhibitory patterns we restrict attention to the surface corrected and surface area $K$-functions. The surface area $K$-function clearly outperforms the surface corrected $K$-function in terms of power for testing CSR for all types of surfaces. One reason could be that the neighbourhood of each point within the surface makes more physical sense for the surface area $K$-function where the neighbourhoods are geodesic balls.
\begin{table}[ht]
\centering
\begin{tabular}{lccccccc}
\hline
 & \multicolumn{3}{c}{Poisson} & \multicolumn{2}{c}{Clustered} & \multicolumn{2}{c}{Repulsive} \\
\cmidrule(lr){2-4} \cmidrule(lr){5-6} \cmidrule(lr){7-8}
Surface & naive & correc.\ & area & correc.\ & area & correc.\ & area \\
  \hline
  Monkey   & 0.917 &0.025 & 0.049 &  0.337& 0.850  &  0.878& 0.982  \\
  Gaussian   & 0.873 & 0.021 & 0.041 &  0.312 & 0.732 &   0.945 &0.987  \\
Polynomial & 1.000 & 0.028 & 0.039 &  0.969 & 1.000 &   0.674 &0.831  \\
\hline
\end{tabular}
\caption{Empirical rejection rates at nominal significance level $\alpha = 0.05$ for the three 
$K$-functions, surfaces and point process models.}
\label{tab:k_comparison}
\end{table}

\section{Applications}
\subsection{Tree locations in the Sinharaja region}

The data used in this section are gathered in a 25ha  $500m \times
500m$ research  plot W located at $6^{\circ}$ $24^{\prime}N$, $80^{\circ}$
$24^{\prime}E$ in the southwestern part of Sinharaja, Sri Lanka.  The
dataset contains the spatial locations of $237$ different tree
species. Elevation data are available on a dense regular grid
covering the plot. We use a combination of kernel smoothing
and interpolation to obtain a smooth function $\varphi$ representing
elevation for any location in the plot. The highly varying topography of the plot is shown
in Figure~\ref{fig:sinharaja}. The estimated surface area of the plot,
$\hat{\nu}(S)=$28.9ha, is considerably larger than the
planar area of the plot. 

We apply the discussed $K$-functions to point
patterns of tree locations for two species:
\textit{Anisophyllea cinnamomoides} (1653 trees) and \textit{Dalbergia
  pseudo-sissoo} (510 trees), see Figure~\ref{fig:species_all}. 
\begin{figure}[htbp]
    \centering
    \begin{subfigure}[t]{0.45\textwidth}
        \centering
        \includegraphics[width=\textwidth]{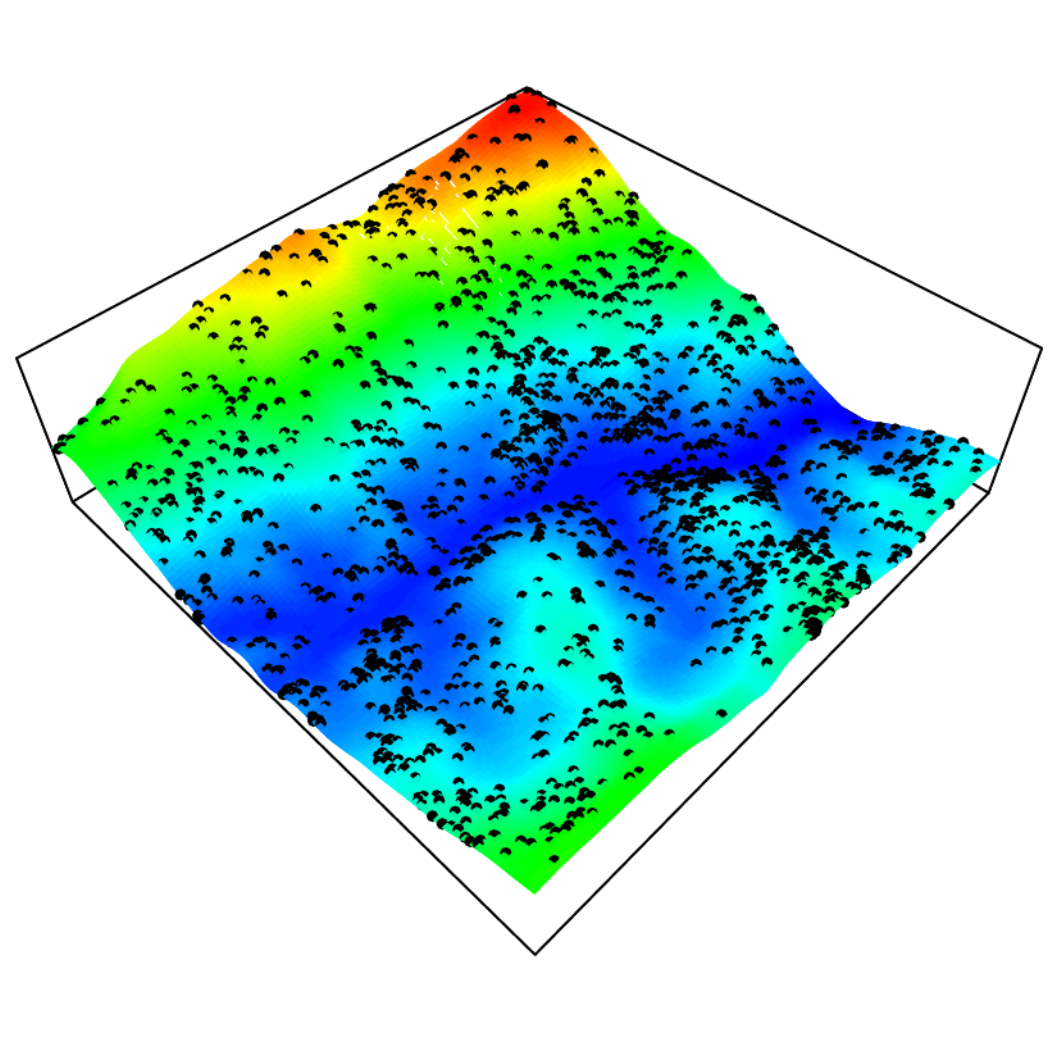}
           \end{subfigure}
    \hfill
    \begin{subfigure}[t]{0.45\textwidth}
        \centering
        \includegraphics[width=\textwidth]{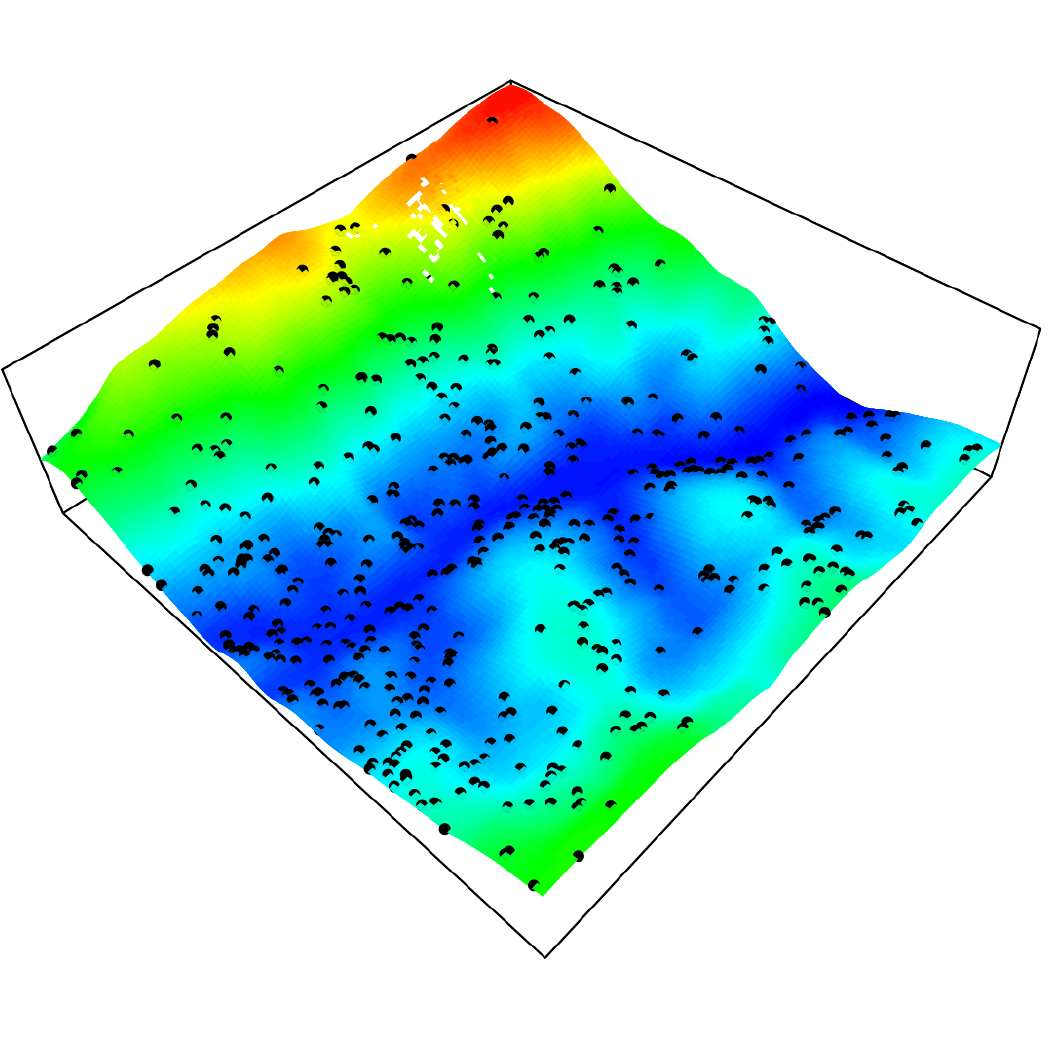}
            \end{subfigure}
    \caption{Patterns of tree locations for two tree
      species (color scale represents elevation). Left: \textit{Anisophyllea cinnamomoides}. Right: \textit{Dalbergia pseudo-sissoo}.}\label{fig:species_all}
\end{figure}

We do not have access to environmental covariates such as soil properties
but one might expect that the intensities of the trees could depend on
topographic features such as elevation and slope. For
each species, we hence estimate (following Appendix~\ref{sec:estimationintensity}) the surface intensity function using a parametric 
log-linear model of the form 
\begin{equation}\label{eq:dataintensity}
\log\lambda((u,z(u))) = \beta_0 + \beta_1 z(u) + \beta_2 u_1 + \beta_3 u_2 
+ \beta_4 \partial_{u_1} z(u) + \beta_5 \partial_{u_2} z(u),
\end{equation} where $z(u) = \varphi_3(u)$ denotes the elevation at location
$u = (u_1, u_2) \in W$, and $\partial_{u_1} z(u)$, $\partial_{u_2} z(u)$ 
are the terrain gradients in each parametric direction. 
The fitted intensity functions are shown in the lower row of
Figure~\ref{fig:intensity_all}. 
\begin{figure}[htbp]
    \centering
\begin{subfigure}[t]{0.45\textwidth}
        \centering
        \includegraphics[width=\textwidth]{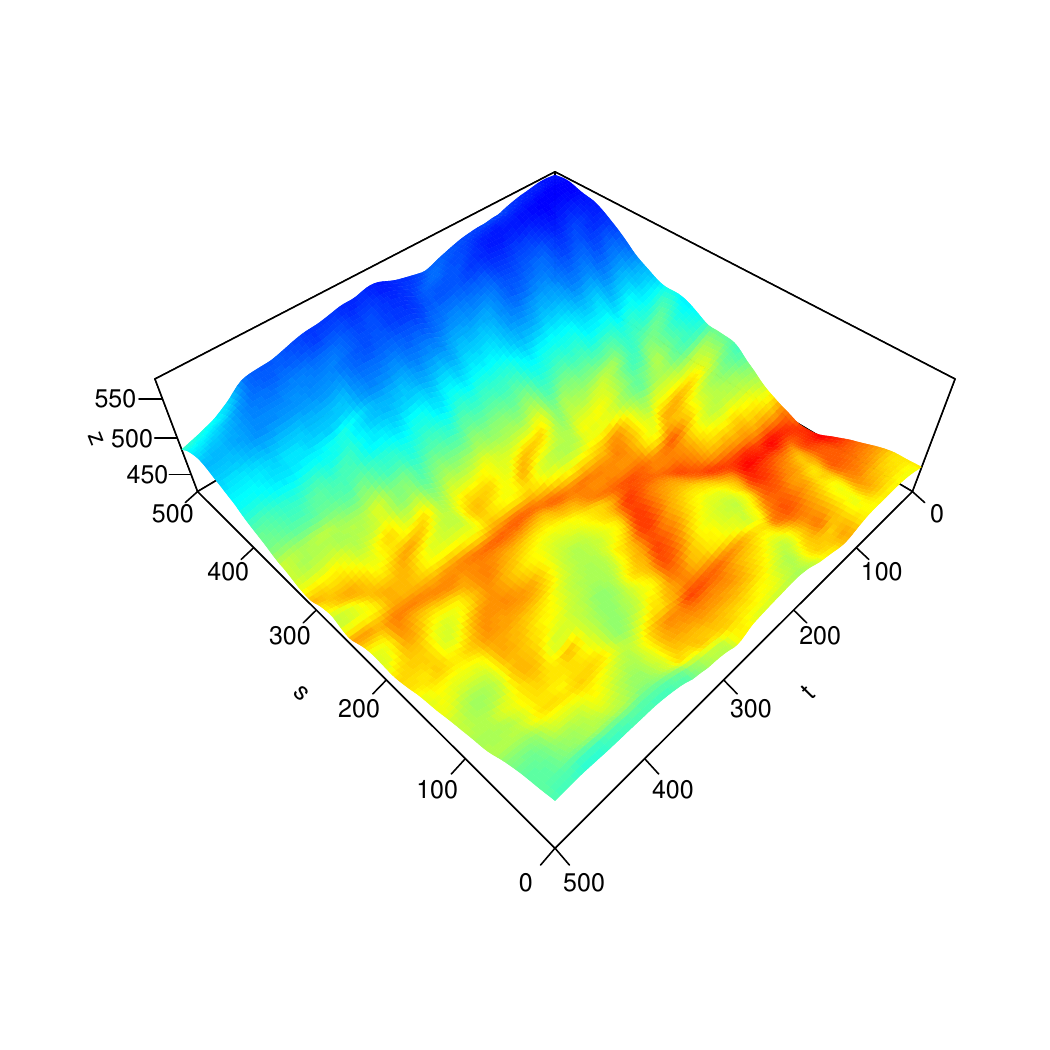}
     \end{subfigure}
     \hfill
    \begin{subfigure}[t]{0.45\textwidth}
        \centering
        \includegraphics[width=\textwidth]{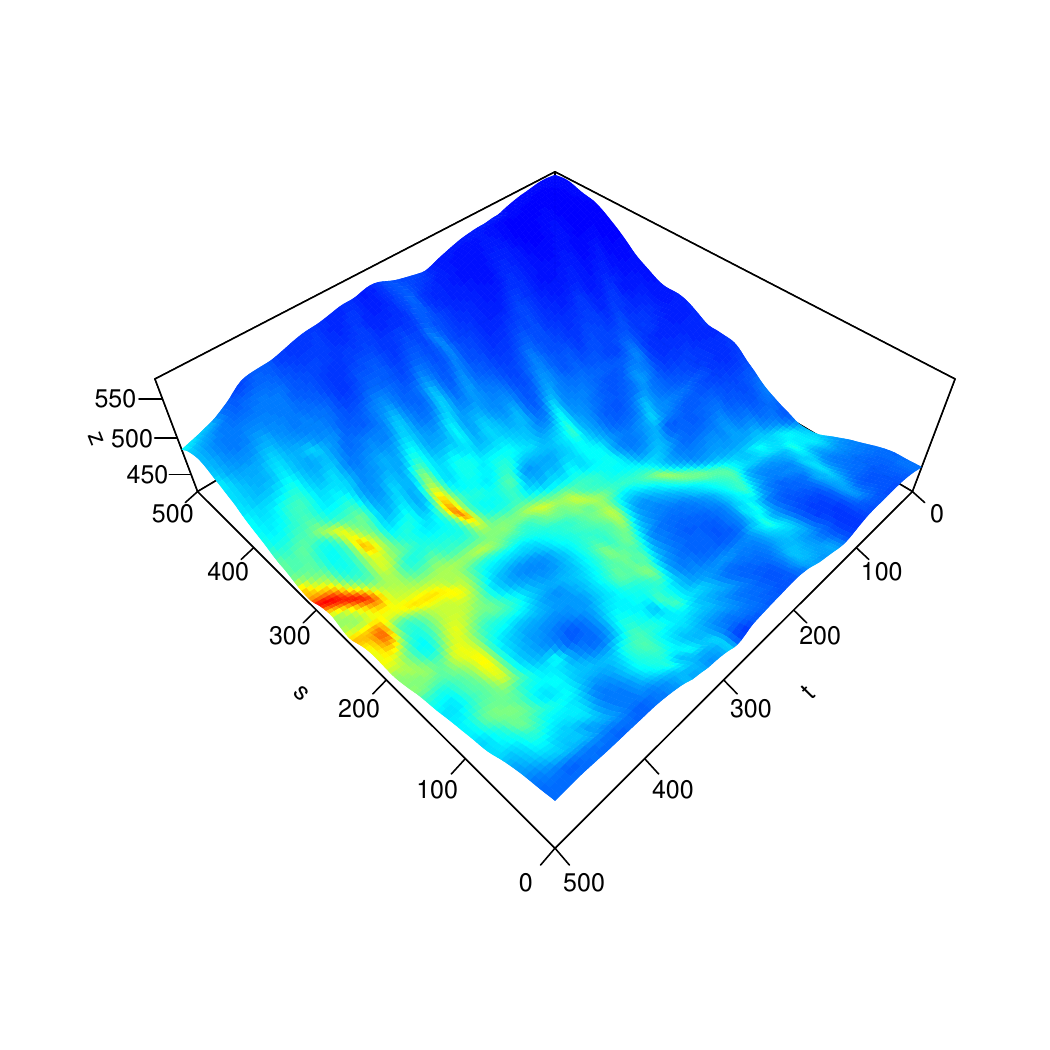}
     \end{subfigure}
    \caption{Estimated surface log-linear intensity
      function for the two species.  Left: \textit{Anisophyllea cinnamomoides}. Right: \textit{Dalbergia pseudo-sissoo}.}\label{fig:intensity_all}
\end{figure}

The estimated intensity functions are
plugged in for the true surface intensity functions in the
surface corrected and surface area $K$-functions \eqref{eq:Ksc} and
\eqref{eq:saK}. For comparison we also compute the standard
inhomogeneous planar $K$-function \eqref{eq:Kstandard} with the log
intensity function $\rho(u)$ also specified by the right hand side
of \eqref{eq:dataintensity}, but crucially now viewed as an intensity
function defined on the planar region $D$. This ignores that trees live on a surface in
three-dimensional space and corresponds to the current practice when
using a $K$-function to analyse clustering of rain forest trees.

We test the CSR null hypothesis via global envelope tests \citep{myllymaki2017global,myllymaki2024get}  based 
on $2499$ Monte Carlo simulations of inhomogeneous Poisson 
processes with the fitted intensity functions, either on the surface for the surface corrected and surface
area $K$-functions, or in the plane for the standard inhomogeneous $K$-function. 
The global envelope tests are shown in Figures~\ref{fig:Kfunctions:Aniso} and~\ref{fig:Kfunctions:Dalb}. In all
cases we plot $K$-functions as a function of area rather than
distance. We moreover center the $K$-functions by subtracting area meaning that the reference value (dashed line in the plots) for the centered $K$-functions
is zero under the Poisson process.
\begin{figure}[!htbp]
  \centering
  \begin{subfigure}[t]{0.32\textwidth}
    \includegraphics[width=\textwidth]{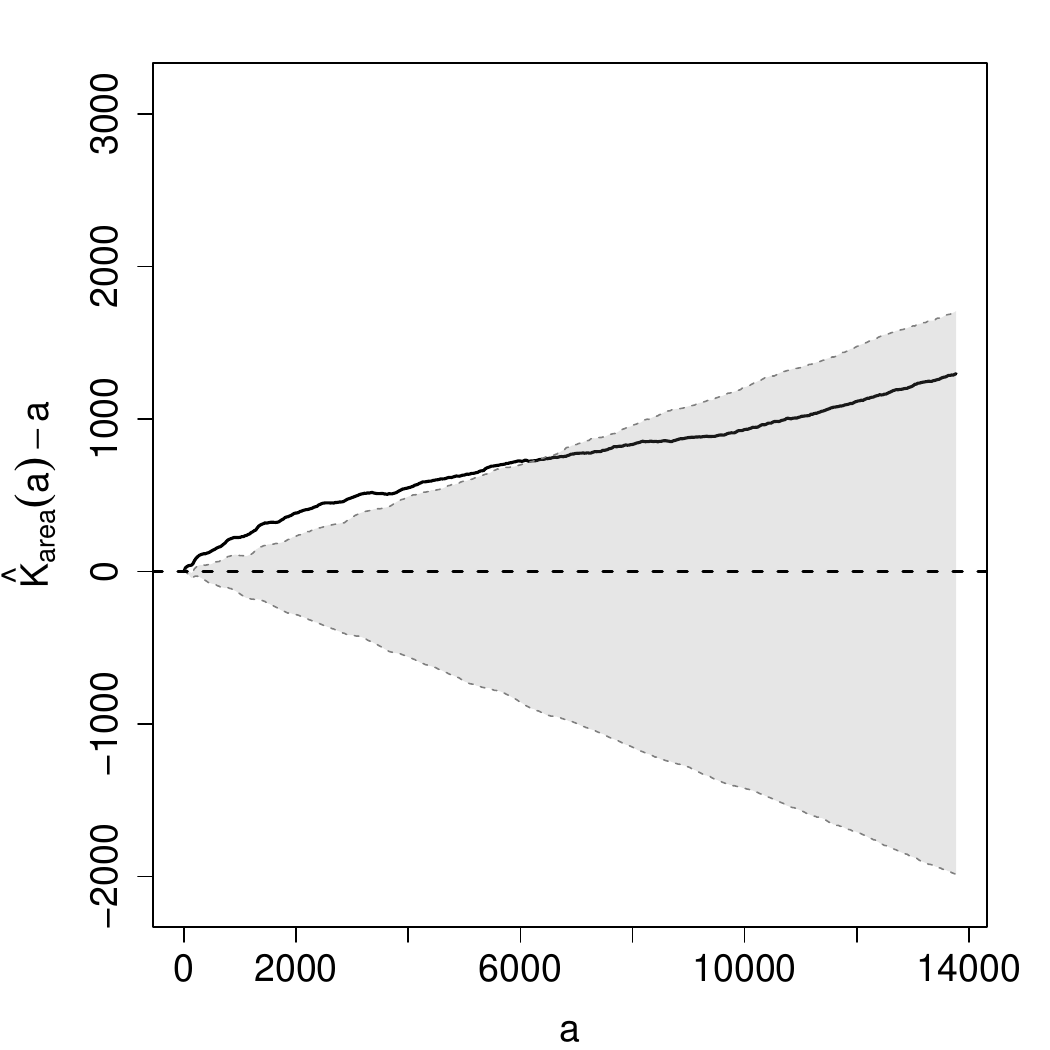}
    \caption{$\hat{K}_{\mathrm{area}}(a)-a$}
    \label{fig:Karea:Aniso}
  \end{subfigure}
  \hfill
  \begin{subfigure}[t]{0.32\textwidth}
    \includegraphics[width=\textwidth]{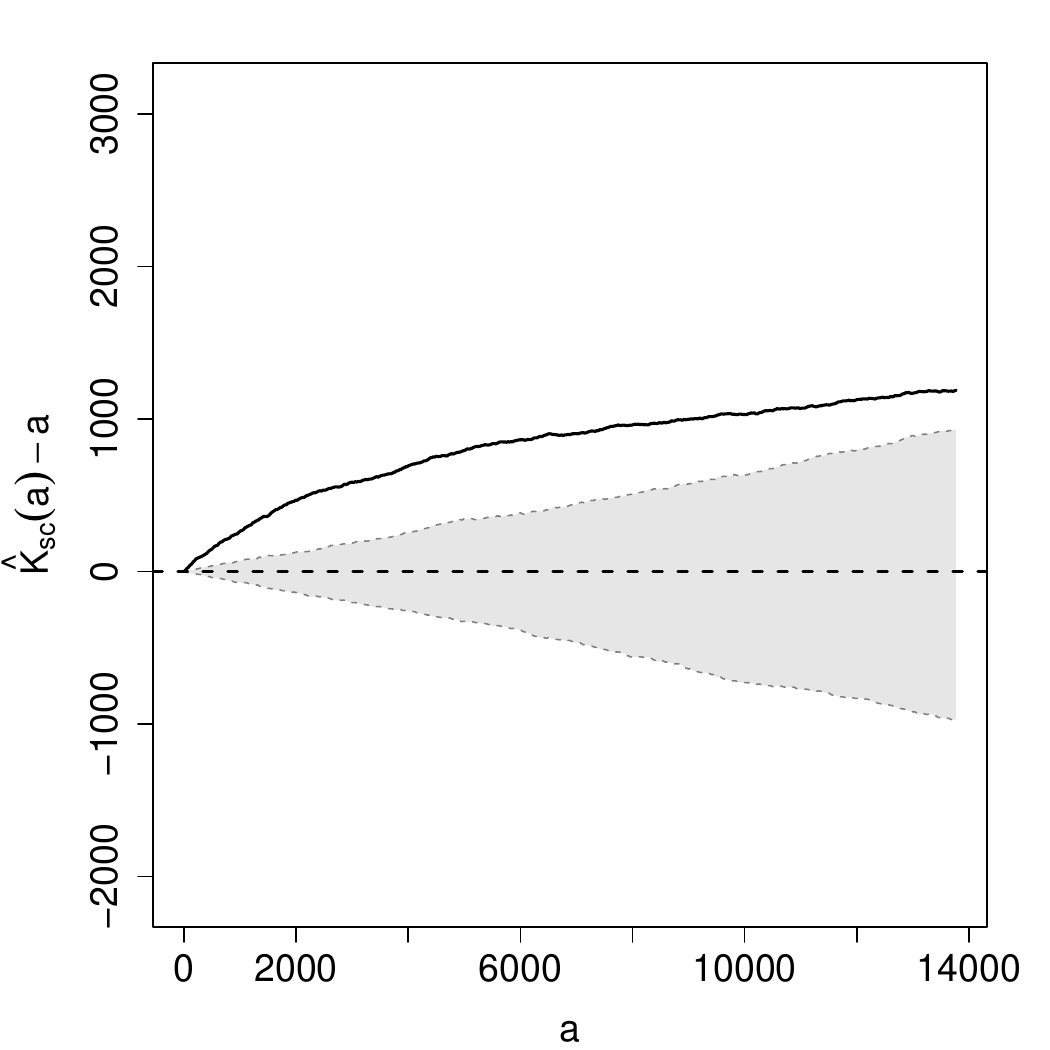}
    \caption{$\hat{K}_{\mathrm{sc}}(a)-a$}
    \label{fig:Ksurfcorr:Aniso}
  \end{subfigure}
  \hfill
  \begin{subfigure}[t]{0.32\textwidth}
    \includegraphics[width=\textwidth]{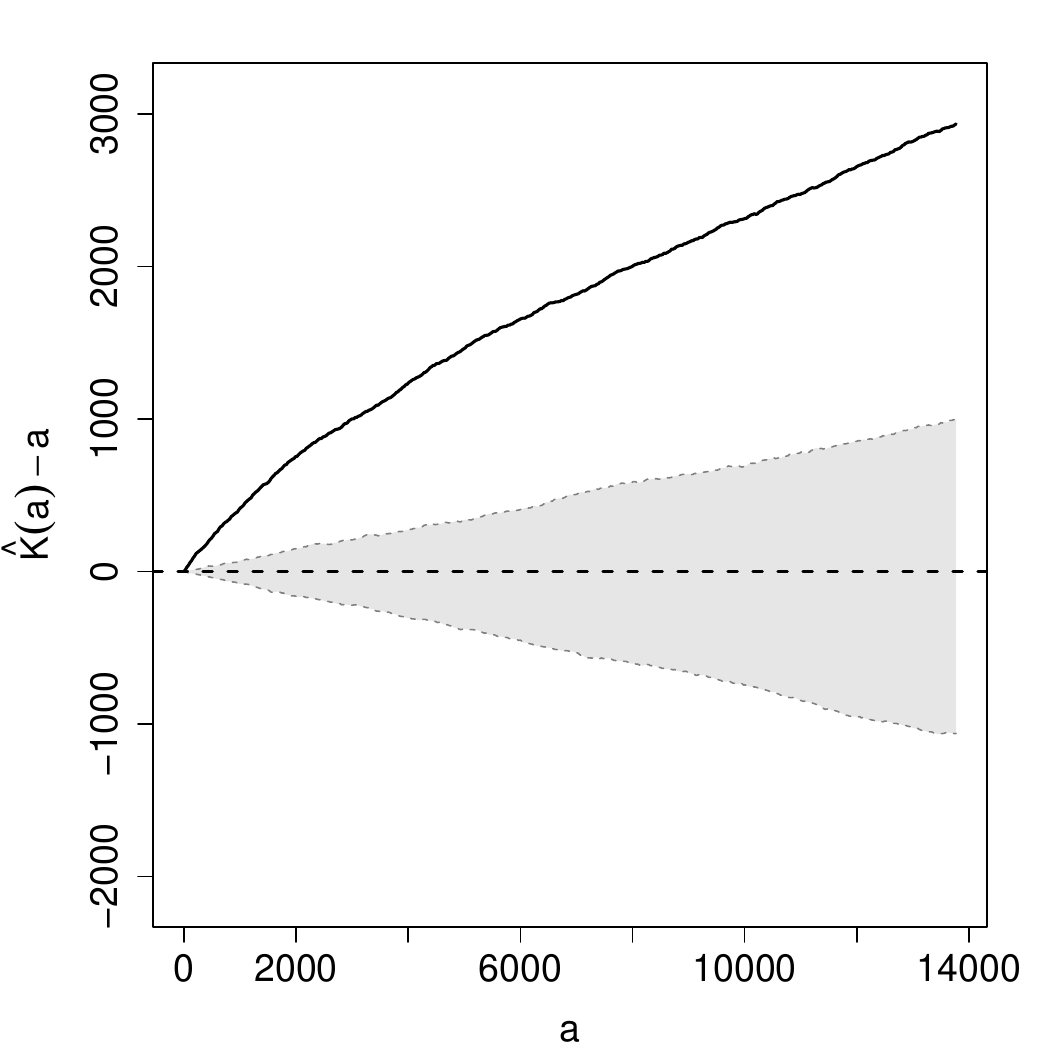}
    \caption{$\hat{K}(a)-a$}
    \label{fig:Kinhom:Aniso}
  \end{subfigure}
  \caption{Global envelope tests for \textit{Anisophyllea cinnamomoides}.
    Grey band: 95\% global envelope under the fitted inhomogeneous Poisson model.
    Black line: observed curve. Dashed line: mean centered $K$-function for Poisson process.}
    \label{fig:Kfunctions:Aniso}
\end{figure}
\begin{figure}[!htbp]
  \centering
  \begin{subfigure}[t]{0.32\textwidth}
    \includegraphics[width=\textwidth]{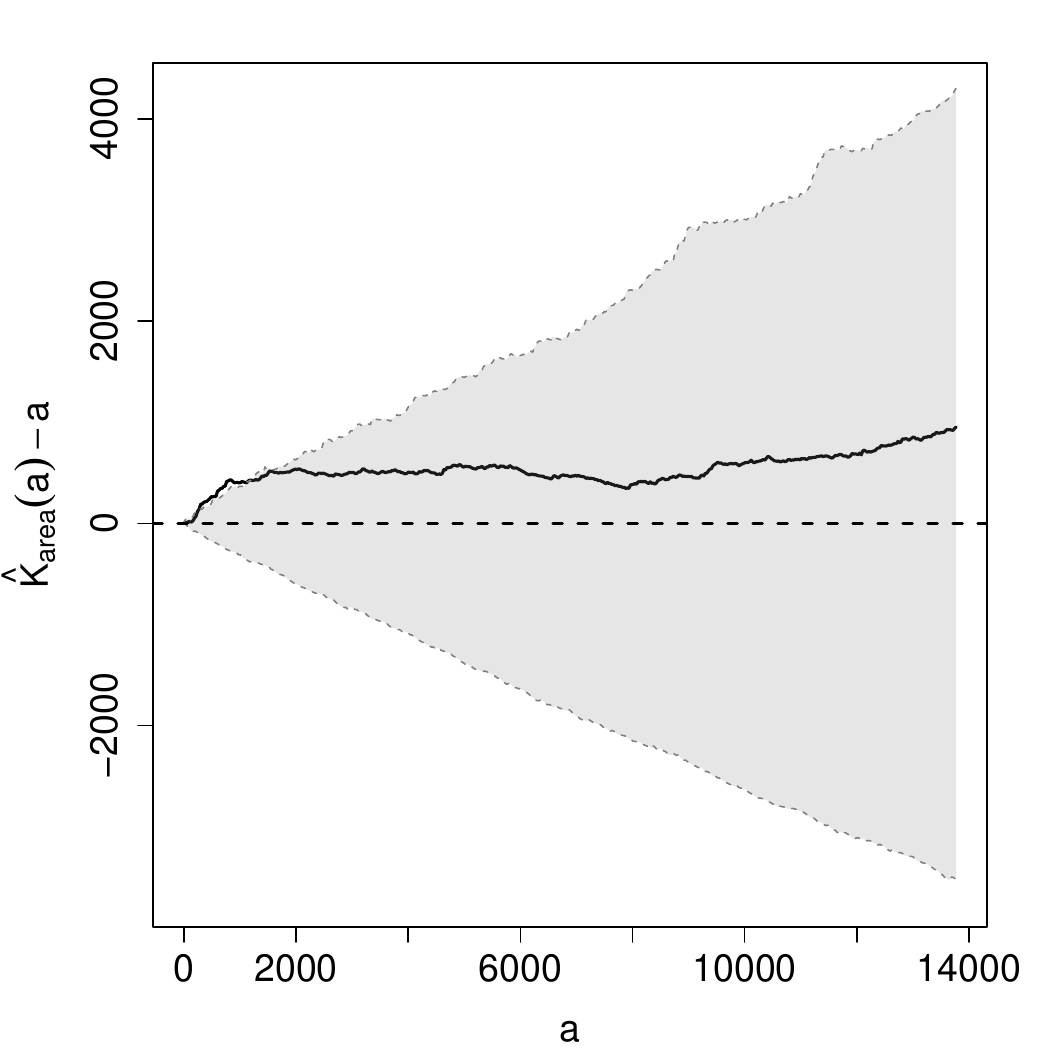}
    \caption{$\hat{K}_{\mathrm{area}}(a)-a$}
    \label{fig:Karea:Dalb}
  \end{subfigure}
  \hfill
  \begin{subfigure}[t]{0.32\textwidth}
    \includegraphics[width=\textwidth]{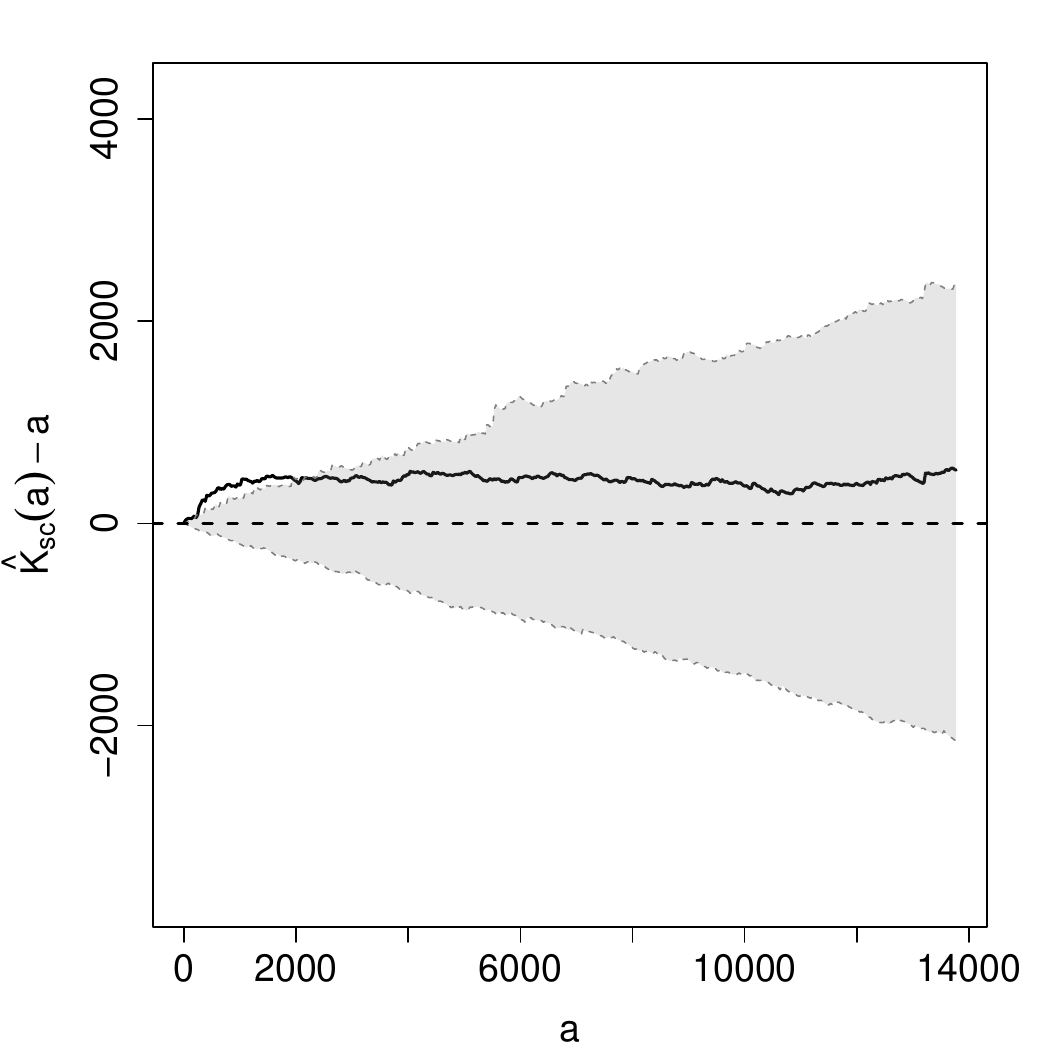}
    \caption{$\hat{K}_{\mathrm{sc}}(a)-a$}
    \label{fig:Ksurfcorr:Dalb}
  \end{subfigure}
  \hfill
  \begin{subfigure}[t]{0.32\textwidth}
    \includegraphics[width=\textwidth]{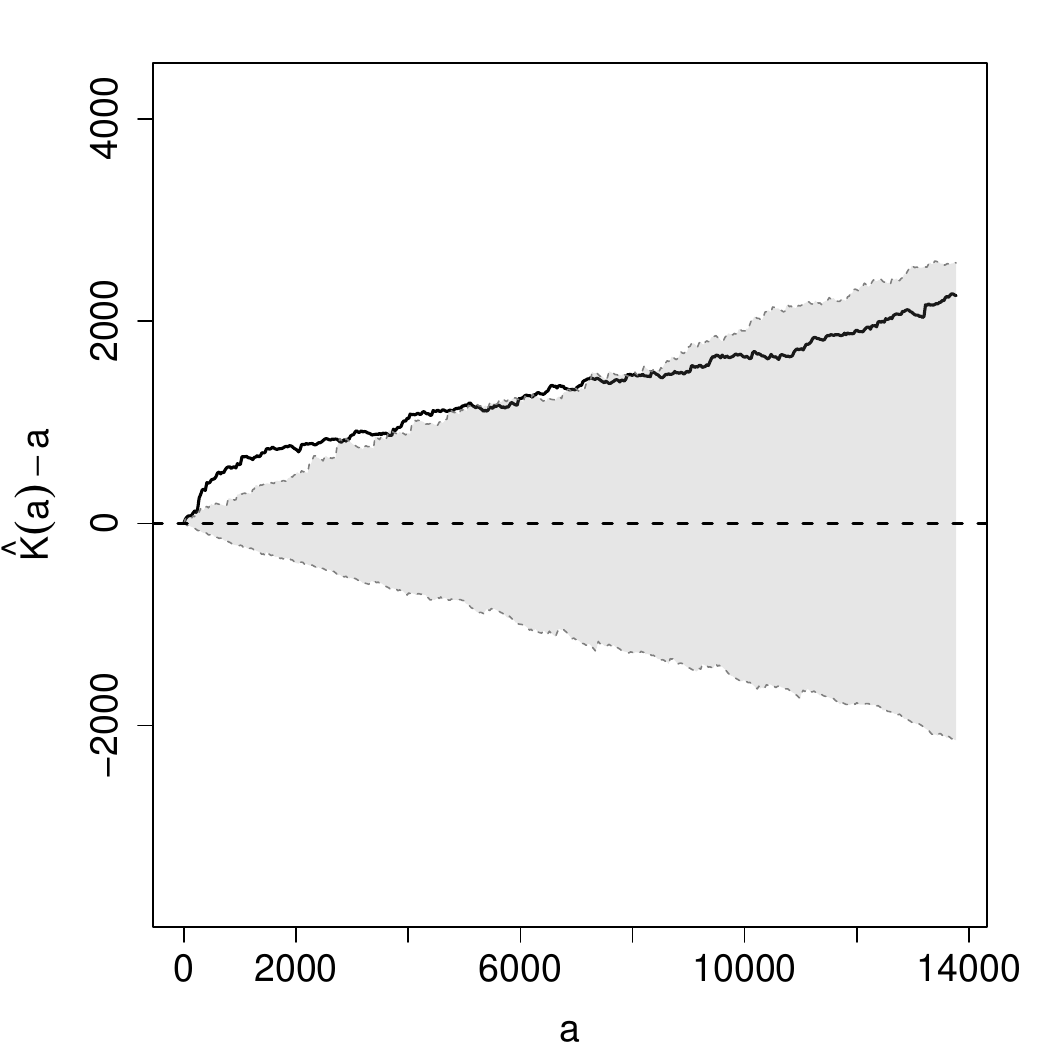}
    \caption{$\hat{K}(a)-a$}
    \label{fig:Kinhom:Dalb}
  \end{subfigure}
  \caption{Global envelope tests for \textit{Dalbergia
      pseudo-sissoo}.  Grey band: 95\% global envelope under the fitted inhomogeneous Poisson model.
    Black line: observed curve. Dashed line: mean centered $K$-function for Poisson process.}
    \label{fig:Kfunctions:Dalb}
\end{figure}

In terms of assessing the null-hypothesis, all of the
$K$-function results qualitatively agree on rejecting at the 5\%
level due to clustering ($K$-functions above the envelopes). However,
quantitatively, the deviation from the null is much bigger for the
standard $K$-function than for the surface-aware $K$-functions. This is likely caused by the standard $K$-function picking
up spurious clustering arising when tree locations on the non-flat
topographic surface are projected to the plane. Thus to get a quantitatively
correct assessment of the level of clustering it is important to account
for the fact that trees exist on a three-dimensional
surface. Also recall that the simulation study in the previous
  section showed lack of type I error control for the standard planar $K$-function.

\subsection{Lipopolysaccharide molecules on the surface of Escherichia coli bacteria}

Lipopolysaccharide (LPS) molecules are a major component of the outer 
membrane of Gram-negative bacteria such as \textit{Escherichia coli}. 
Understanding the spatial organisation of LPS molecules on the bacterial surface is therefore of 
considerable biological interest.

The data \citep{Mamou2025} consist of $2370$ coordinates of LPS molecule locations on 
the surface of a single \textit{E. coli} bacterium, see left plot in 
Figure~\ref{fig:ecoli}. Following
\citet{Kumar2025}, the bacterial surface is approximated 
by a pill-shaped surface  $S$, defined by a cylindrical 
body of half-length $h = 737.4$ nm and hemispherical caps of radius 
$r = 462.6$ nm, fitted by minimising the sum of squared distances from 
the molecule locations to the surface. The estimated surface area of the 
bacterium is $\hat{\nu}(S) = 2\pi r (2r + 2h) = 6.97\mu m^2$.

We apply the surface area $K$-function $\hat{K}_{\mathrm{area}}$ 
\eqref{eq:saK} to the LPS localisation data and test the null
hypothesis of homogeneous CSR using the global envelope
test based on $2499$ Monte Carlo simulations from a homogeneous
Poisson process with expected number of points equal to the observed
number of points. The result is shown in the right plot in
Figure~\ref{fig:ecoli}, where the area $K$-function clearly indicates
rejection of the CSR hypothesis, suggesting that the LPS dataset
exhibits clustering that should be incorporated into any subsequent
downstream modelling. This conclusion is entirely consistent with the
analysis of \cite{Kumar2025}, in which CSR was also rejected using the
projection-based method of \cite{ward2021testing} onto the
  sphere $\mathbb{S}^2$. 

A key advantage of the area $K$-function proposed here is that it is a function of area in the natural units of the bacterial surface, and is therefore directly interpretable in the context of the data. In contrast, the spherical projection approach of \cite{ward2021testing} produces a $K$-function of a radius $r$ defined as great-circle distance on $\mathbb{S}^2$, which does not immediately correspond to a meaningful distance on the bacterium. Furthermore, the estimated area $K$-function appears to lie more clearly outside the envelopes than in the analysis of \cite{Kumar2025}, suggesting stronger evidence against CSR. However, we refrain from making any claims about the comparative power of the two approaches, since this would require a more extensive investigation beyond the scope of the present paper. That said, such a finding would be consistent with \cite{ward2021testing}, who showed for point processes on ellipsoids that the statistical power of their approach decreases as the surface deviates further from $\mathbb{S}^2$.

\begin{figure}[htbp]
    \centering
\begin{tabular}{cc}
    \includegraphics[width=0.4\textwidth,height=0.6\textwidth]{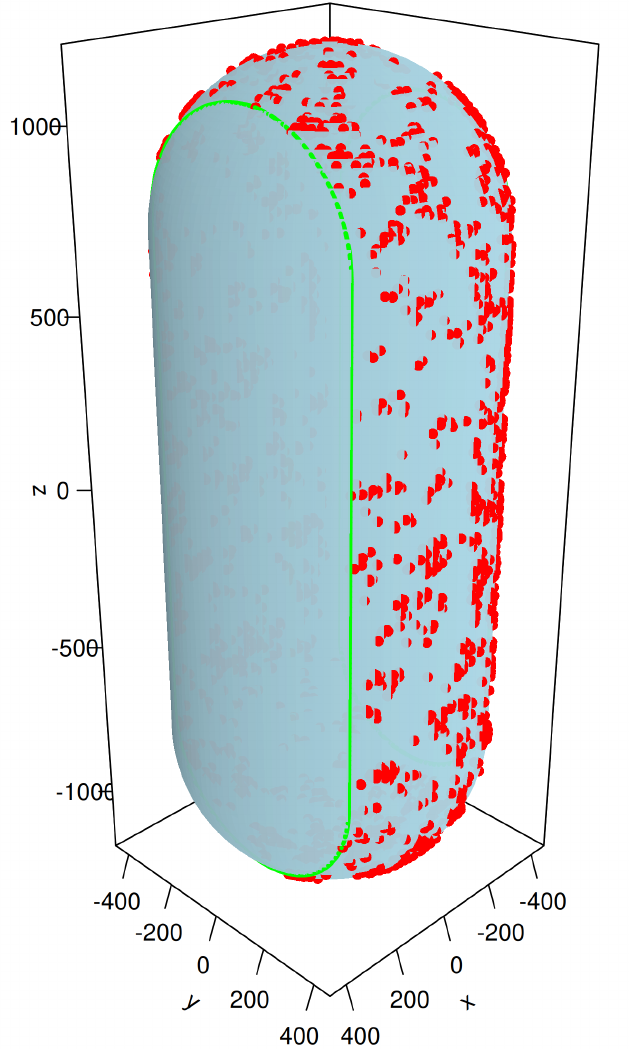} & \includegraphics[width=0.6\textwidth,height=0.6\textwidth]{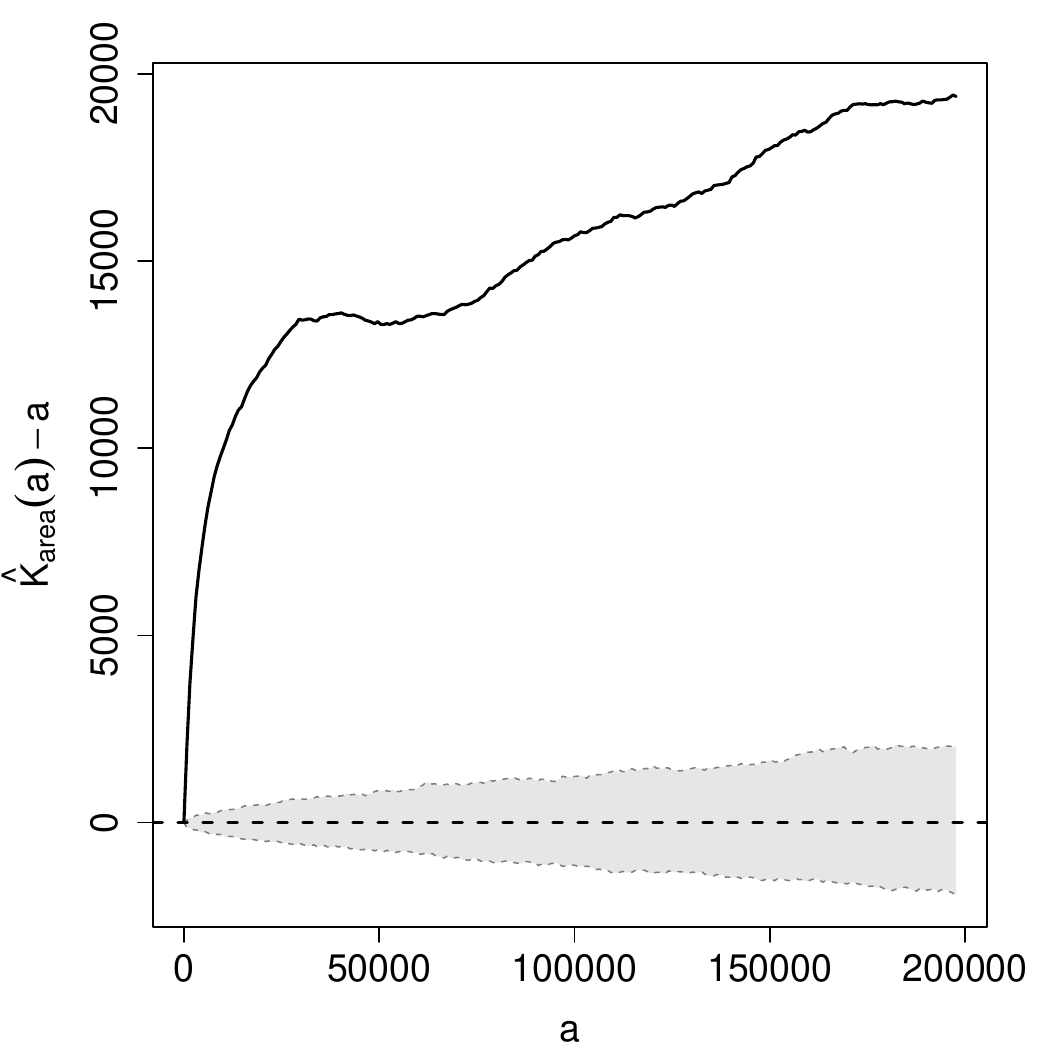}
\end{tabular}
    \caption{Left: view of the LPS dataset. The red dots are the
      LPS molecules, while the green curves on both sides of the pill
      delimit the observed region and the unobserved region due to
      the sampling procedure. Right: global envelope test.  Grey
      band: 95\% global envelope under the fitted homogeneous
      Poisson model. Black curve: centered surface area $\hat
      K_{\text{area}}(a)-a$-function. Dashed line: reference value zero under null hypothesis.}\label{fig:ecoli}
\end{figure}

\section{Discussion}

It is common practice in spatial statistics to treat locations
  of objects on a three-dimensional surface as planar point pattern,
  retaining only the $x$ and $y$ coordinates of the objects. We showed
  that this can lead to misleading inference regarding clustering
  properties due to spurious clustering when projecting points residing on a
  complex surface. It is thus crucial to define alternatives to the
  usual planar $K$-function that take into account the surface domain of the points.

We defined surface aware $K$-functions as empirical quantities that have
well-defined expectations under a Poisson process. They moreover from
a practical point of view make intuitive sense as summary statistics
for measuring deviations (clustering or regularity) from Poisson. It
would nevertheless be interesting to identify conditions that would make our surface
$K$-functions meaningful as estimates of population
characteristics for more general point processes.  
 For our tangent plane $K$-function with the logarithm mapping, it suffices
that the pair correlation function on the surface only depends on
geodesic distance between points. It is an open question how a similar
condition should be formulated for the surface area $K$-function. Also
it remains to construct specific models satisfying such
conditions.

Our proposed surface area $K$-function is a natural extension of the
classical $K$-function on Euclidean spaces. Interestingly, it
highlights the fact that area is a more natural argument for a $K$-function than radius,
since the expected number of points in a neighbourhood precisely
depends on the area of the neighbourhood. The surface area
$K$-function appealingly becomes the identity mapping under a Poisson
process.

\section*{Acknowledgments}

We gratefully acknowledge access to the first census data from the
Sinharaja research plot. The 25-ha Long-Term Ecological Research Project at Sinharaja World Heritage Site is a collaborative project of the Uva Wellassa University, the University of Peradeniya, the Forest Global Earth Observatory of the Smithsonian Tropical Research Institute, with supplementary funding received from the John D.\ and Catherine T.\ Macarthur Foundation, the National Institute for Environmental Science, Japan, and the Helmholtz Centre for Environmental Research-UFZ, Germany, for past censuses. The PIs gratefully acknowledge the Forest Department, Uva Wellassa University, and the Post-Graduate Institute of Science at the University of Peradeniya, Sri Lanka for supporting this project, and the local field and lab staff who tirelessly contributed in the repeated censuses of this plot. 

Francisco Cuevas-Pacheco was funded by the National Agency for Research and Development of Chile, under the grant ANID FONDECYT INICIACION 11240330.

\appendix

\section{Estimation of the intensity function of a surface point process}\label{sec:estimationintensity}

Suppose $X$ is observed within $S_W=\varphi(W) \subseteq S$ and that a parametric model $\ld_\ta$ is specified for the intensity
function. We can then estimate  $\ta$ using the unbiased estimating function
\[ \sum_{s \in X \cap S_W} \frac{\ld'(s;\ta)}{\ld(s;\ta)}- \int_{S_W}\ld'(t;\ta) \nu(\dd t). \]
Suppose we generate a dummy point process $D$ on $S_W$ of intensity
$\rho(\cdot)$ \citep{diaconis2013sampling}. Then the logistic regression approximation \citep{baddeley2014logistic} of the previous estimating function is
\begin{equation}\label{eq:logistic}
\sum_{s \in X \cap S_W} \frac{\ld'(s;\ta)}{\ld(s;\ta)}- \sum_{t \in (X
  \cup D) \cap S_W} \frac{\ld'(t;\ta)}{\ld(t;\ta)+\rho(t)}
\end{equation}
For the implementation of this we would use \texttt{glm} with offset
$-\log \rho(t)$. 

We could also consider the projected point processes $X_\pl$ and
$D_\pl=\varphi^{-1}(D)$ for which the logistic regression estimating function becomes
\[ \sum_{u \in X_\pl \cap W} \frac{\ld'_\pl(u;\ta)}{\ld_\pl(u;\ta)}- \sum_{u \in (X_\pl
  \cup D_\pl) \cap W} \frac{\ld_\pl'(u;\ta)}{\ld_\pl(u;\ta)+\rho_\pl(u)}
.\]
This is, however, exactly the same as \eqref{eq:logistic} after cancellation of $\|n(u)\|$.

\section{The Eikonal Equation}\label{sec:eikonal}

The \emph{Eikonal equation} is a nonlinear first-order partial differential equation
that describes the propagation of wavefronts through a medium
\citep{Sethian1999, Kimmel1998, Osher2003}. Let $s_{0} = \varphi(u_{0}) \in S$ be a
prescribed source point with $u_{0} \in D$, and let $\zeta : D \rightarrow \mathbb{R}$
be the function such that $\zeta(u)$ gives the geodesic distance $d_{S}(\varphi(u), s_{0})$
from $s_{0}$ to $\varphi(u)$. Then $\zeta$ satisfies

\begin{equation}
  \left\{ \begin{array}{l}
    \| \nabla \zeta(u) \|_{G}^{2} = 1, \\[4pt]
    \zeta(u_{0}) = 0,
  \end{array}\right.
  \label{eq:eikonal}
\end{equation}
where the squared gradient norm is taken with respect to the metric matrix $G(u)$
introduced in Section~\ref{sec:surfgeod}:
\begin{equation}
  \|\nabla \zeta(u)\|_{G}^{2}
  \;=\;
  \nabla \zeta(u)^{\top}\, G^{-1}(u)\, \nabla \zeta(u),
  \label{eq:grad-norm}
\end{equation}
and $\nabla \zeta(u) = (\partial_{u_{1}} \zeta(u),\, \partial_{u_{2}} \zeta(u))^{\top}$
is the gradient of $\zeta$ in the parameter coordinates.

Recall from Section~\ref{sec:surfgeod} that the $(i,j)$th entry of $G(u)$ is
\[
  G_{ij}(u) = \langle \partial_{u_{i}} \varphi(u),\, \partial_{u_{j}} \varphi(u) \rangle
  \;=\; \sum_{k=1}^{3} \partial_{u_{i}}\varphi_{k}(u)\, \partial_{u_{j}}\varphi_{k}(u),
  \qquad i,j = 1,2.
\]
The matrix $G(u)$ encodes all local geometric information of the surface, including
arc lengths, element areas, and angles. Its inverse $G^{-1}(u)$, whose $(i,j)$th entry
is denoted $G^{ij}(u)$, is the contravariant form of the metric at $u$. The solution
$\zeta(u)$ to \eqref{eq:eikonal}-\eqref{eq:grad-norm} recovers precisely the geodesic
distance $d_{S}(\varphi(u), s_{0})$, i.e.\ the length of the shortest path on $S$
connecting $\varphi(u)$ to $s_{0}$.

\subsection{The Fast Marching Method}

The Fast Marching Method \citep[introduced in][]{Sethian1996} exploits the fact that
the solution $\zeta$ of \eqref{eq:eikonal} grows monotonically outward from the source
$s_{0}$. The surface $S$ is approximated by $S_G =\varphi(D_G)$
where $D_G$ is a discrete grid covering $D$. In the developed
  code, $D_{G}$ is a regular grid and pairs
of grid nodes adjacent in the horizontal or vertical directions are
considered neighbours. The geodesic distance from $s_0=\varphi(u_0), u_0 \in D_G$, to every grid
node $t =\varphi(u)$, $u \in D_G$ is evaluated. Once a geodesic
distance is obtained for a node, no future update can decrease it, exactly
like Dijkstra's algorithm on a weighted graph but with a continuous PDE update rule
instead of edge weights. The algorithm is summarized in
Algorithm~\ref{alg:fmm}. Within the algorithm the following update rule is used.

\subsubsection*{The local update rule (Godunov upwind scheme)}

Consider a grid node corresponding to the parameter value $u = (u_1, u_2) \in D_G$, with
grid spacing $h_1$ and $h_2$ in the $u_1$- and $u_2$-directions respectively. Let
$\zeta^-_1 = \min(\zeta(u_{1}-h_1, u_2),\,\zeta(u_{1}+h_1, u_2))$ and
$\zeta^-_2 = \min(\zeta(u_{1}, u_2-h_2),\,\zeta(u_{1}, u_2+h_2))$ be the smallest
accepted neighbour values in each coordinate direction, and let $\sigma_1,\sigma_2 \in
\{-1,+1\}$ record whether $\zeta_1^-$ and $\zeta_2^-$ came from the
lower or upper neighbour respectively ($\sigma_i=+1$ if
  $\zeta_i^- = \zeta(u_i-h_i,u_{i})$, and $\sigma_i=-1$ if $\zeta_i^-
  = \zeta(u_i+h_i,u_{i})$). Then, the candidate value $\zeta=\zeta(u)$ solves
the quadratic equation
\citep{Kimmel1998}
\begin{equation}
  \frac{G^{11}(u)}{h_1^2}(\zeta - \zeta^-_1)^2
  + \frac{2\,\sigma_1\sigma_2\, G^{12}(u)}{h_1 h_2}(\zeta - \zeta^-_1)(\zeta - \zeta^-_2)
  + \frac{G^{22}(u)}{h_2^2}(\zeta - \zeta^-_2)^2 = 1,
  \label{eq:update-cross}
\end{equation}
taking the larger root. The signs $\sigma_1$ and $\sigma_2$ are needed because the one-sided difference approximating $\partial_{u_i}\zeta$ changes sign with the upwind
direction, and must be recomputed at every node since $\zeta_1^-,\zeta_2^-$ may each
come from either side.
The solution $\zeta$ is accepted only if $\zeta \geq \zeta^-_1$ and $\zeta \geq
\zeta^-_2$ (upwind condition). If this fails, or if only one of
$\zeta_1^-,\zeta_2^-$ is available (e.g.\ near a domain boundary), the update falls
back to a one-dimensional step along a single coordinate direction:
\begin{equation}
  \zeta = \min\!\left(\zeta_1^- + \frac{h_1}{\sqrt{G^{11}(u)}},\;\;
                        \zeta_2^- + \frac{h_2}{\sqrt{G^{22}(u)}}\right),
  \label{eq:update-1d}
\end{equation}
where a term is omitted from the minimum if the corresponding neighbour value is
unavailable. If the local metric is (near-)degenerate, \eqref{eq:update-cross} and
\eqref{eq:update-1d} are instead evaluated using unadjusted first-order neighbour
values.

\begin{algorithm}
\caption{Fast Marching Method}
\label{alg:fmm}
\DontPrintSemicolon
\KwIn{Grid $D_G$ with metric $G(u)$, source node $u_{0} \in D_G$}
\KwOut{Geodesic distance field $\zeta(u)$ for all nodes $u \in D_G$}
\BlankLine
\tcp{Initialisation}
Set $\zeta(u_{0}) \leftarrow 0$ and mark $u_{0}$ as \textsc{Accepted}\;
\For{each neighbour $u$ of $u_{0}$}{
  Compute candidate $\zeta(u)$ via \eqref{eq:update-cross}\;
  Mark $u$ as \textsc{Tentative} and insert into min-heap $\mathcal{H}$ with key $\zeta(u)$\;
}
Mark all remaining nodes as \textsc{Far} with $\zeta \leftarrow +\infty$\;
\BlankLine
\tcp{Main loop}
\While{$\mathcal{H} \neq \emptyset$}{
  $u^* \leftarrow$ \textsc{ExtractMin}$(\mathcal{H})$\;
  Mark $u^*$ as \textsc{Accepted}\;
  \For{each neighbour $u$ of $u^*$ not yet \textsc{Accepted}}{
    Compute candidate $\zeta_{\mathrm{new}}(u)$ via 
    \eqref{eq:update-cross} using all \textsc{Accepted} neighbours\;
    \uIf{$u$ is \textsc{Far}}{
      $\zeta(u) \leftarrow \zeta_{\mathrm{new}}(u)$\;
      Mark $u$ as \textsc{Tentative} and insert $u$ into $\mathcal{H}$\;
    }
    \ElseIf{$u$ is \textsc{Tentative} \textbf{and} $\zeta_{\mathrm{new}}(u) < \zeta(u)$}{
      $\zeta(u) \leftarrow \zeta_{\mathrm{new}}(u)$ and \textsc{DecreaseKey} in $\mathcal{H}$\;
    }
  }
}
\end{algorithm}

\newpage
\section*{Supplementary Material: Tangent Plane $K$-function}

\section{Introduction}\label{supp-sec:intro}

The tangent plane $K$-function is constructed by locally representing the surface in a Euclidean tangent space at each point of interest, and then defining the estimator in terms of Euclidean balls in that tangent space. In doing so, it provides a means of translating local geometric structure on the surface into a form that can be analysed using Euclidean tools.

To recap the exposition in the main text: the surface $S\subset
\mathbb{R}^3$ is defined as $S = \{s\in\mathbb{R}^3 :
s=\varphi(u)|u\in D\}$, where $D\subseteq\mathbb{R}^2$ is an open
subset, $\varphi:\mathbb{R}^2\mapsto\mathbb{R}^3$ is an injective
vector valued function $\varphi(u) =
(\varphi_1(u),\varphi_2(u),\varphi_3(u))$, and $u=(u_1,u_2)$ is of
dimension two. Let $T_s$ denote the tangent plane at $s\in S$, and
assume that $h_{ST}$ is a mapping (e.g.\ projection) from $S$ to
$T_s$. Suppose we observe $X$ inside $S_W = \varphi(W)$ for
  some $W\subseteq D$. We restrict the domain of $h_{ST}$ to a
connected neighbourhood $N_s \subset S_W$ of $s$ so that $h_{ST}$
becomes injective as a mapping from $N_s$ to $T_s$. We choose $N_s$ to
be the maximal such neighbourhood. The smallest geodesic distance from
the boundary of $N_s$ to $s$ is called the \emph{injectivity radius} at $s$, denoted $r^\ast_s$, and $r^\ast_S\equiv \inf_{s\in S}\ r^\ast_s$ is called the injectivity radius of $S$. 

We then let $X_{T_s} \subset A_{T_s}= h_{ST}(N_s)$ be the point process of points $h_{ST}(t)$, $t \in X \cap N_s$, and let $b_T(s,r) \subset T_s$ be the disc with
centre $s$ and radius $r$ contained in $T_s$.  Moreover, denote by $\ld_{T_s}$ the intensity function of $X_{T_s}$, whose derivation will later be demonstrated.
 Then, we define a tangent plane $K$-function as
\begin{equation}\label{supp-eq:tangentK} \hat K_{\text{tangent}}(r)=\frac{1}{\nu(S_W)} \sum_{s \in X \cap
		S_W} \sum_{v \in X_{T_s}\setminus h_{ST}(s)} \frac{1[ v \in b_T(s,r)]}{\ld(s) \ld_{T_s}(v)}e_T(s,r). \end{equation}
In this definition, $e_T(s,r)= \pi r^2/\nu( b_T(s,r) \cap A_{T_s})$
is  a correction factor which plays a role when $r$ is such that $b_T(s,r)\cap A_{T_s}\neq b_T(s,r)$.

Consider the case of a Poisson process and let \begin{align*} \hat K_s(r) & =  e_T(s,r) \sum_{t \in   X_{T_s} \setminus h_{ST}(s)} \frac{1[t \in b_T(s,r)]}{\ld_{T_s}(t)} .  \end{align*}
Then
\begin{align*} &\EE \hat K_s(r) = e_T(s,r) \int_{A_{T_s}}1[t \in  b_T(s,r)] \nu(\dd t) = \pi r^2 \end{align*}    which does not depend on $s$. We can rewrite our tangent plane $K$-function as
\[  \hat K_{\text{tangent}}(r)=\frac{1}{\nu(S_W)} \sum_{s \in X \cap
	S_W} \frac{1}{\ld(s)} \hat K_s(r) .\]
If $X$ is a Poisson process, it then follows by the  Slivnyak-Mecke theorem that
\[ \EE  \hat K_{\text{tangent}}(r) = \int_{S_W} \frac{1}{\nu(S_W)}
\frac{1}{\ld(s)}
\EE \hat K_S(r)  \ld(s)  \nu(\dd s) = \pi r^2.\]

We consider two choices for the mapping $h_{ST}$. The first is the
orthogonal projection onto $T_s$, which is computationally simple and
can be implemented directly for surfaces embedded in
$\mathbb{R}^3$. However, this mapping is not intrinsically adapted to
the geometry of $S$: Euclidean balls in $T_s$ do not, in general,
correspond to geodesic balls on the surface, and the resulting tangent
plane $K$-function therefore loses a direct geometric interpretation. The second choice is the logarithm map arising from a Riemannian formulation of $S$. This mapping is canonically associated with the intrinsic geometry of the surface and, locally, preserves geodesic distance from the base point, so that Euclidean balls of radius $r$ in $T_s$ correspond exactly to geodesic balls of radius $r$ on the surface. Furthermore, this framework allows us to move beyond the empirical formulation of the $K$-function in (\ref{supp-eq:tangentK}), to obtain a formal theoretical characterization. We treat these two cases in turn.

\section{Orthogonal mapping}\label{supp-sec:ortho}

We begin with the case where $h_{ST}$ is the orthogonal projection
onto the tangent plane. To derive the corresponding tangent space intensity, we first require some preliminaries on transformations between surfaces.

\subsection{Transformation between surfaces}\label{supp-sec:J_T}

Consider a general smooth surface $S$ in $\R^3$ parametrized by some injective function $\varphi:\R^2 \rightarrow S$.
By (8.8.6) in \cite{hoffman2017probability}, for $f: S \rightarrow [0,\infty[$,
\begin{equation}\label{supp-eq:r2S} \int_{\R^2} f(\varphi(u)) J(u) \dd u= \int_{S} f(s)  \nu(\dd s),  \end{equation}
where $\nu$ is surface measure (Hausdorff measure) on $S$ and
\[J(u)=\sqrt{\|\varphi_{x}(u)\|^2 \|\varphi_{y}(u)\|^2 - (\varphi_{x}(u)^\T \varphi_{y}(u))^2},\]
where $\varphi_x(u)=\partial \varphi/\partial x$ and $\varphi_y(u)=\partial \varphi/\partial x$. If $\varphi(u)=(u,z(u))$ for some function
$z:\R^2  \rightarrow \R$, then $\varphi_x(u)=(1,0,z_x(u))^\T$ and
$\varphi_y(u)=(0,1,z_y(u))^\T$ in which case
$J(u)=\sqrt{1+z_x^2(u)+z_y^2(u)}$.  

\subsection{Intensity of point process on tangent plane}\label{supp-sec:tangentplaneintensity}

Let $T_s$ denote the tangent plane at the point $s \in S$ and
denote the resulting point process projected onto $T_s$ by $X_{T_s}$ with intensity $\ld_{T_s}$ (with respect to surface measure on
$T_s$). Then we have for $A \subseteq T_s$, 
\begin{align*} \EE n(X_{T_s} \cap A) = \EE \sum_{s \in X} 1[ h_{ST}(s) \in A] =& \int_{S} 1[h_{ST}(s) \in A] \ld(s) \nu( \dd s) \\ =
	&\int_{T_s} 1[ v \in A] \ld_{T_s}(v) \nu (\dd v) .\end{align*}
Note that we can parametrize $T_s$ by the composite mapping
$h_T(u)=h_{ST}(\varphi(u))$. Then, applying \eqref{supp-eq:r2S} twice,
we obtain
\[ \int_{\R^2} 1[h_{ST}(\varphi(u)) \in A] \ld(\varphi(u)) J(u) \dd u =
\int_{\R^2} 1[ h_{T}(u) \in A] \ld_{T_s}(h_{T}(u) )J_{T}(u)
\dd u 
\]
where $J_{T}(u)$ is the Jacobian for the mapping $h_{T}(u)$. Thus we
can conclude
\[ \ld_{T_s}(h_{T}(u)) = \ld(\varphi(u)) \frac{J(u)}{J_T(u)} \]
or 
\[ \ld_{T_s}(v) = \ld(h_{ST}^{-1}(v)) \frac{J(h_T^{-1}(v))}{J_T(h_T^{-1}(v))}.\]

\subsection{Calculation of $J_T$ in case of orthogonal projection and topograpic surface}\label{supp-sec:jacobian}

We have 
\[ J_T(u)= \sqrt{\|h_{T,x}(u)\|^2 \|h_{T,y}(u)\|^2 - (h_{T,x}(u)^\T
	h_{T,y}(u))^2} \]
where 
\[ h_{T,x}(u)= h'_{ST}(\varphi(u)) (1,0,z_x(u))^\T \quad h_{T,y}(u)=
h'_{ST}(\varphi(u)) (0,1,z_y(u))^\T \]
and $h'_{ST}(t)$ is the matrix of partial derivatives $\partial h_{ST,i}/
\partial t_j$. Suppose $h_{ST}(t)=p_T(t)$, the orthogonal projection of $t$ on $T_s$. The orthogonal projection can be expressed as
\[ p_T(t)= t - \frac{(t-s)^\T n_s}{ \|n_s\|^2} n_s =
\left (I- \frac{n_s n_s^\T}{ \|n_s\|^2} \right)t -
\frac{n_s n_s^\T}{ \|n_s\|^2} s \]
where $n_s= (-z_x(\varphi^{-1}(s)),-z_y(\varphi^{-1}(s)),1)$ is the surface normal vector at $s$.
Hence, 
\[ h_{ST}'(t) = I- \frac{n_s n_s^\T}{ \|n_s\|^2} .\]
Computing the norm of the cross product of the resulting $h_{T,x}$ and
$h_{T,y}$ results in
\[ J_T(u)= \frac{|n_s^\T n(u)|}{ \|n_s\|}, \]
where $n_{\varphi(u)}=n(u)= (-z_x(u),-z_y(u),1)^\T$ is the surface normal vector at
$\varphi(u)$. Thus
\[ \frac{J_T(u)}{J(u)}= \frac{|n_s^\T n_{\varphi(u)}|}{ \|n_s \|
	\|n_{\varphi(u)})\|} = |\cos(\theta)|, \]
where $\theta$ is the angle between $n_s$ and $n_{\varphi(u)}$.

\section{Riemannian formulation and the Exp/Log mapping}\label{supp-sec:riemannian}

We now turn to the Riemannian formulation. The smooth surface $S \subset \mathbb{R}^3$ may be viewed as a 2-dimensional Riemannian manifold equipped with the metric induced by its embedding, and the associated geodesic distance agrees with that used in the main text. In what follows, we write $(\cM,\rg)$ for a general Riemannian manifold, with the understanding that in the setting of the main manuscript $\cM=S$ endowed with its induced metric. This provides the setting for the logarithm and exponential maps used below. We next summarize the required definitions and notation.

\subsection{Riemannian manifolds}

Let $(\cM,\rg)$ be a smooth orientated $d$-dimensional Riemannian manifold with metric $\rg$ without boundary and for $s\in\cM$ denote $T_{s}$ to be the tangent space at $s$. Let $(U,\psi)$ be a local chart, where $U\subset\cM$ and $\psi:U\rightarrow \mathbb{R}^d$ is a homeomorphism such that $\psi(s) = (x^1(s),...,x^d(s))$ for $s\in\cM$. Metric $\rg$ can be written locally at $s\in\cM$ as
\begin{equation*}
	\rg_s = \sum_{i,j=1}^{d} \rg_{ij}(s)\,dx^i \otimes dx^j,
\end{equation*}
where $\otimes$ is the tensor product and $\rg_{ij}:U\mapsto\mathbb{R}$ is a collection of $d^2$ smooth functions such that $\rg_{ij}(s)=\langle \partial/\partial x^i|_{s}, \partial/\partial x^j|_{s} \rangle_s$, where $\langle \cdot,\cdot \rangle_s$ represents the inner-product on the tangent space $T_{s}$ assigned by $\rg$ \citep{Lee2018}. 

We denote the Riemannian volume form, or just volume form, as $d\text{vol}$, where
\begin{equation*}
	d\text{vol}= \sqrt{\det(\rg_{ij})} dx^1\wedge\cdots\wedge dx^d= \sqrt{\det(\rg_{ij})} dx,
\end{equation*}
where $\wedge$ is the wedge product and $dx=dx^1\wedge\cdots\wedge dx^d$ \citep{Lee2018}. We define the volume of compact subset $B\subset\cM$ as $\text{Vol}(B)=\int_B\, d\text{vol}(x)$. 

The geodesic from point $s$ to a point $t\in\cM$ can be informally considered as the curve connecting these points such that the \emph{distance} is minimal. Formally, consider the set of continuously differentiable curves $\gamma:[a,b] \mapsto \cM$ with $\gamma(a)=s$ and $\gamma(b)=t$. Then the geodesic $\gamma'$ is the curve that minimises
\begin{equation*}
	L_\gamma = \int_a^b \sqrt{\rg_{\gamma(\tau)}\left(\frac{d\gamma(\tau)}{d\tau},\frac{d\gamma(\tau)}{d\tau}\right)} d\tau.
\end{equation*}
The geodesic distance between points $s$ and $t$ is taken as the minimum value of $L$ over all possible curves $\gamma$ and denoted $d_{\cM,\rg}:\cM\times\cM\mapsto \mathbb{R}$. Note that $(\cM,d_{\cM,\rg})$ defines a metric space.

Let $v\in T_{s}$ where $s\in\cM$. Then there exists a unique geodesic $\gamma_{s,v}$ such that $\gamma_{s,v}(0)=s$ and $d\gamma_{s,v}(\tau)/d\tau|_{\tau=0}=v$. The exponential mapping $\exp: T_{s}\rightarrow\cM$ is defined as $\exp_{s}(v) = \gamma_{s,v}(1).$ Furthermore, the logarithm map is defined as the inverse of the exponential map, i.e. $\log_{s}(t)= \exp^{-1}_{s}(t)$ where $t\in\cM$.

Following \cite{Willmore1996}, we can define the volume density function as,
\begin{equation*}
	\theta_{s}(t)=\left.\frac{\exp^*_{s}d\text{vol}(x)}{\mu_\rg(dx)}\right|_{x=\log_{s}(t)}
\end{equation*}
for $s,t\in\cM$, where $\exp^*$ is the pullback metric under the exponential map and $\mu_\rg$ is the Lebesgue measure on $T_s$ induced by $\rg$. By imposing that the injectivity radius $r_{\cM,\rg}^*$ of $(\cM,\rg)$ is strictly positive, the volume density function is at least well defined for all $t$ lying in a neighbourhood of $s$. Using Jacobi fields, \cite{Willmore1996} shows that the volume density function can be defined for all $t\in\cM$ instead of in a neighbourhood of $s\in\cM$. Moreover, if we consider the specific chart $(U,\log_{s})$, such that $U$ is a compact neighbourhood of $s\in\cM$, then the volume density function can be written as
\begin{equation*}
	\theta_{s}(t) = \sqrt{\det[\rg_{ij}(\log_{s}(t))]}.
\end{equation*}
Let $(U,\psi)$ be a chart of $\cM$ and $f$ a
continuous function over a compactly supported subset of $U$. Then,
\begin{equation*}
	\int_U f(s)\,d\text{vol}(s) = \int_{\psi(U)} (\psi^{-1})^*f(s)\sqrt{\det(\rg_{ij})} ds,
\end{equation*}
where $(\psi^{-1})^*f$ is the pullback of $f$ by $\psi^{-1}$ and $\rg_{ij}$ is the Riemannian metric expressed locally in the coordinates given by $(U,\psi)$.

\subsection{Point processes on Riemannian manifolds}

We denote $X$ to be a point process over the metric space
$(\cM,d_{\cM,\rg})$ with its accompanying random counting measure
$N_X(B)$ which counts the number of points of $X$ in $B\subset
\cM$. We suppose that $X$ is locally finite and simple, i.e. that
$N_X(B)<\infty$ for any compact set $B$ and points of $X$ are not
coincidental respectively. The intensity measure of $X$ is
$\mu(B)=\mathbb{E}[N_X(B)]$ and if $\mu$ is absolutely continuous with
respect to the volume form then it can be written as
\begin{equation*}
	\mu(B)=\int_{B}\lambda(s)d\text{vol}(s),
\end{equation*}
where $\lambda$ is known as the intensity of $X$, which we assume to exist. We also define the pair correlation function (pcf) of $X$ as,
\begin{equation*}
	g(s,t)=\frac{\lambda^{(2)}(s,t)}{\lambda(s)\lambda(t)},
\end{equation*}
where $\lambda^{(2)}$ is the second order intensity function defined as the density of $\mathbb{E}\sum_{s,t\in X}^{\neq}1[s\in B_1,t\in B_2]$ with respect to the product of the volume forms $d\text{vol}(s)\times d\text{vol}(t)$, i.e. $\lambda^{(2)}:\cM\times\cM\mapsto\mathbb{R}^+$ is the function such that,
\begin{equation*}
	\mathbb{E}\mathop{\sum\nolimits\sp{\ne}}_{s,t\in X}1[s\in B_1,t\in B_2]=\int_{B_1}\int_{B_2}\lambda^{(2)}(s,t)d\text{vol}(s)d\text{vol}(t),
\end{equation*}
where $B_1,B_2\subset \cM$ such that $\text{Vol}(B_1)>0$ and $\text{Vol}(B_2)>0$. The Campbell-Mecke Theorem for point processes on Riemannian manifolds is
\begin{equation*}
	\mathbb{E}\sum_{s\in X} f(s,X\backslash s) = \int_{\cM} \mathbb{E}[f(s,X^{!}_{s})]\lambda(s) d\text{vol}(s),
\end{equation*}
where $f$ is a non-negative measurable function and $X^{!}_{s}$ is the reduced Palm process of $X$ \citep{Moller2003}. We say that a point process $X$ is Poisson with intensity $\lambda:\cM\mapsto\mathbb{R}$ if for any $B\subset\cM$ such that $\text{Vol}(B)>0$, the number of points of $X$ in $B$ is Poisson with mean $\int_B\lambda(s)d\text{vol}(d s)$ and given the number of points, they are independently and identically distributed across $B$ with density $\lambda(s)/\mu(B)$.

\subsection{Exp/log mapping}

When constructing the tangent $K$-function in the main manuscript we require a tractable mapping from the original space to a tangent space at any given point. While the orthogonal mapping is computationally efficient to work with, as discussed, Euclidean balls on the tangent space do not, in general, translate back to geodesic balls on the manifold. Instead, a natural choice for Riemannian manifolds is to set the mapping $h_{\cM,T}:\cM\rightarrow T_s$ to be the logarithm mapping, denoted $\log_s$, with its inverse, the exponential mapping, denoted $\exp_s$. In particular, noting that $\log_s(s)=0$ and that $d_{\cM,\rg}(s,t)=\|\log_s(t)\|_{s}$ \citep{doCarmo1992}, the map $\log_s$ sends the geodesic ball $b_{\cM}(s,r)$ on the manifold to the Euclidean ball $b_{T_s}(0,r)$ on the tangent space, or equivalently to the Euclidean ball $b_{\mathbb{R}^d}(0,r)$ in $\mathbb{R}^d$. The latter has volume $\pi^{d/2}r^d/\Gamma(1+d/2)$, irrespective of the Riemannian volume of $b_{\cM}(s,r)$ on $\cM$. 

For point processes on $\mathbb{R}^d$, there is the requirement that
the process be \emph{second order intensity reweighted stationary} for
the $K$-function to be well defined, in the global sense
\citep{Baddeley2000}. This allows for inhomogeneity by requiring that
the pcf of the process is translationally invariant. Analogously, we require a corresponding assumption on the first- and second-order structure of our process in order to provide a well-defined $K$-function for point processes on Riemannian manifolds. 

To define a valid $K$-function on linear networks, \citet{rakshit2017} introduce $\delta$\emph{-correlated} processes. Similarly, we define the $d_{\cM,\rg}$\emph{-correlated} class of point processes on Riemannian manifold $(\cM,\rg)$. 

\begin{definition}\label{supp-def:d-correlated}
	Let $X$ be a point process on Riemannian manifold $(\cM,\rg)$. We say that $X$ is $d_{\cM,\rg}$-correlated if,
	\begin{equation*}
		g(s,t)=g_0(d_{\cM,\rg}(s,t)),
	\end{equation*}
	where $d_{\cM,\rg}$ is the distance metric induced by $\rg$ and $g_0:\mathbb{R}^+\rightarrow\mathbb{R}^+$.
\end{definition}

It is easily shown that a Poisson process, with either
constant or varying intensity function, has a pcf equal to 1 and so a
Poisson process is therefore $d_{\mathcal M,\rg}$-correlated for any
Riemannian metric $\rg$. Furthermore, suppose one is able to
define a Gaussian random field $\{Z(\cdot)\}$ on $\cM$ such
that $\cov\{Z(s),Z(t)\}$ depends only on
$d_{\cM,\rg}(s,t)$. Then a log Gaussian Cox process with
random intensity function $\exp(Z(\cdot))$ would be
$d_{\cM,\rg}$-correlated too.

\subsubsection*{$K$-function}

We define the $K$-function for $d_{\mathcal M,\rg}$-correlated point processes using similar arguments as that made in Definition 5 of \cite{rakshit2017}.

\begin{definition}\label{supp-def:Kg}
	Let $X$ be a $d_{\mathcal M,\rg}$-correlated point process on $(\cM,\rg)$ with intensity function $\lambda$ that is positive and locally integrable over $\cM$. The $K$-function with respect to $(\cM,\rg)$ is
	\begin{equation}\label{supp-eq:K:homo}
		K_{\cM,\rg}(r)
		=
		\frac{1}{\mathrm{Vol}(A)}
		\mathbb{E}\mathop{\sum\nolimits\sp{\ne}}_{s,t\in X}
		\frac{1[s\in A,\; t\in b_{\cM}(s,r)]}{\lambda(s)\lambda(t)\theta_s(t)},
	\end{equation}
	for any $A\subset \cM$ of positive volume and $0\leq r<r^*_{\cM,\rg}$.
\end{definition}

We note that a natural estimator for $K_{\cM,\rg}(r)$ is exactly $\hat K_{\text{tangent}}(r)$ as defined in (\ref{supp-eq:tangentK}), recognizing that $\lambda_{T_s}(t)= \lambda(t)\theta_s(t)$ and $e_{T,S}(r) = 1$ when $0\leq r<r^*_{\cM,\rg}$.

\begin{proposition}\label{supp-prop:K:g:relation}
	Let $X$ be a $d_{\mathcal M,\rg}$-correlated point process on $(\cM,\rg)$ with positive and locally integrable intensity function $\lambda$. Then $K_{\cM,\rg}$ and the pair correlation function $g_0$ are related by
	\begin{equation*}
		K_{\cM,\rg}(r)=\int_{b_{\mathbb{R}^d}(0,r)}g_0(\|v\|)\,dv,
	\end{equation*}
	where $0<r<r^*_{\cM,\rg}$, $dv \equiv \mu_\rg(dv)$ is the Lebesgue measure on $\mathbb{R}^d$, and $\|\cdot\|$ denotes the Euclidean norm on $\mathbb{R}^d$.
\end{proposition}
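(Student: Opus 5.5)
The plan is to apply the second-order Campbell formula to the defining expectation \eqref{supp-eq:K:homo} and then change variables via the logarithm map. The volume density function $\theta_s$ is exactly the Jacobian factor that this change of variables produces, so it cancels.

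First, by the definition of $\lambda^{(2)}$ and of the pair correlation function, the expectation in \eqref{supp-eq:K:homo} equals
\[
\frac{1}{\mathrm{Vol}(A)}\int_{A}\int_{b_{\cM}(s,r)} \frac{\lambda(s)\lambda(t)g(s,t)}{\lambda(s)\lambda(t)\theta_s(t)}\,d\mathrm{vol}(t)\,d\mathrm{vol}(s)
= \frac{1}{\mathrm{Vol}(A)}\int_{A}\int_{b_{\cM}(s,r)} \frac{g_0(d_{\cM,\rg}(s,t))}{\theta_s(t)}\,d\mathrm{vol}(t)\,d\mathrm{vol}(s).
\]
The intensities cancel, and this uses $\lambda>0$. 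The $d_{\cM,\rg}$-correlation assumption (Definition~\ref{supp-def:d-correlated}) is what lets us replace $g$ by $g_0$. The Campbell formula is stated for indicator functions of product sets, so it has to be extended to nonnegative measurable integrands by a standard monotone class argument. Measurability of $(s,t)\mapsto\theta_s(t)$ on $\{d_{\cM,\rg}(s,t)<r^*_{\cM,\rg}\}$ I would take for granted from the smoothness of $\exp$.

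Second, I will show that the inner integral equals $\int_{b_{\mathbb{R}^d}(0,r)}g_0(\|v\|)\,dv$ for every fixed $s$. Since $r<r^*_{\cM,\rg}$, the map $\exp_s$ is a diffeomorphism from $b_{T_s}(0,r)$ onto $b_{\cM}(s,r)$. Using the chart $(U,\log_s)$ and the change of variables formula quoted above, we get
\[
\int_{b_{\cM}(s,r)} f(t)\,d\mathrm{vol}(t)=\int_{b_{T_s}(0,r)} f(\exp_s v)\,\theta_s(\exp_s v)\,\mu_\rg(dv),
\]
because $\theta_s(\exp_s v)$ is the density of $\exp_s^*d\mathrm{vol}$ with respect to $\mu_\rg$ at $v$. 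We apply this with $f(t)=g_0(d_{\cM,\rg}(s,t))/\theta_s(t)$. The factors $\theta_s$ then cancel; $\theta_s$ is positive inside the injectivity radius since $\exp_s$ is a local diffeomorphism there. Next, $d_{\cM,\rg}(s,\exp_s v)=\|v\|_s$. Finally, a linear isometry $(T_s,\rg_s)\to\mathbb{R}^d$ carries $\mu_\rg$ to Lebesgue measure and $b_{T_s}(0,r)$ to $b_{\mathbb{R}^d}(0,r)$. Together these give the claimed inner integral.

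Third, since the inner integral does not depend on $s$, the outer integral yields $\mathrm{Vol}(A)$ times that constant, which cancels the normalisation. This proves the formula and also shows that $K_{\cM,\rg}$ does not depend on $A$, so it is well defined.

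The main obstacle is the second step: justifying rigorously that $\theta_s$ is exactly the Jacobian of $\exp_s$ in the change of variables, and that $\exp_s$ is injective on the whole ball. Both rely on $r<r^*_{\cM,\rg}$ together with Willmore's description of $\theta_s$. I would state these as the standard facts from \cite{Willmore1996} and \cite{doCarmo1992} rather than rederive them.
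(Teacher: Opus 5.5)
Your proposal is correct and follows the paper's proof step for step. The steps are: the second-order Campbell formula with $g(s,t)=g_0(d_{\cM,\rg}(s,t))$; the change of variables $t=\exp_s(v)$ on $b_{T_s}(0,r)$, under which $d\mathrm{vol}(t)=\theta_s(t)\,dv$ cancels the $\theta_s$ weight; the identity $d_{\cM,\rg}(s,\exp_s v)=\|v\|_s$ together with an orthonormal identification $T_s\cong\mathbb{R}^d$; and the observation that the inner integral does not depend on $s$, so $\mathrm{Vol}(A)$ cancels. Your extra remarks on extending Campbell's formula to nonnegative measurable integrands, and on the measurability and positivity of $\theta_s$, are technical details that the paper leaves implicit.
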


\begin{proof}
	By Campbell's theorem,
	\begin{align*}
		K_{\cM,\rg}(r)
		&=\frac{1}{\text{Vol}(A)}\mathbb{E}\mathop{\sum\nolimits\sp{\ne}}_{s,t\in X}\frac{1[s\in A, t\in b_{\cM}(s,r)]}{\ld(s)\ld(t)\theta_{s}(t)}\\
		&=\frac{1}{\mathrm{Vol}(A)}
		\int_A\int_{b_{\cM}(s,r)}
		\frac{\lambda^{(2)}(s,t)}{\lambda(s)\lambda(t)\theta_s(t)}
		\,d\mathrm{vol}(t)\,d\mathrm{vol}(s).
	\end{align*}
	Since $X$ is $d_{\mathcal M,\rg}$-correlated,
	\begin{align*}
		K_{\cM,\rg}(r)
		&=
		\frac{1}{\mathrm{Vol}(A)}
		\int_A\int_{b_{\cM}(s,r)}
		\frac{g_0(d_{\cM,\rg}(s,t))}{\theta_s(t)}
		\,d\mathrm{vol}(t)\,d\mathrm{vol}(s).
	\end{align*}
	Now fix $s\in A$. Since $r<r^*_{\cM,\rg}$, the exponential map
	$
	\exp_s : b_{T_s}(0,r)\to b_{\cM}(s,r)
	$
	is a diffeomorphism. Hence, using the change of variables $t=\exp_s(v)$, we have $d\mathrm{vol}(t)=\theta_s(t)\,dv,$ where $dv$ denotes Lebesgue measure on $T_s$ under an orthonormal identification $T_s\cong\mathbb{R}^d$. Moreover, $d_{\cM,\rg}(s,\exp_s(v))=\|v\|_{s}$, where $\|\cdot\|_s^2\equiv \langle\cdot,\cdot\rangle_s$.
	Therefore,
	\begin{align*}
		\int_{b_{\cM}(s,r)}
		\frac{g_0(d_{\cM,\rg}(s,t))}{\theta_s(t)}
		\,d\mathrm{vol}(t)
		&=
		\int_{b_{T_s}(0,r)}
		g_0(\|v\|_{s})\,dv \\
		&=
		\int_{b_{\mathbb{R}^d}(0,r)}
		g_0(\|v\|)\,dv.
	\end{align*}
	The final expression is independent of $s$, so
	\begin{align*}
		K_{\cM,\rg}(r)
		&=
		\frac{1}{\mathrm{Vol}(A)}
		\int_A
		\left(
		\int_{b_{\mathbb{R}^d}(0,r)}g_0(\|v\|)\,dv
		\right)
		d\mathrm{vol}(s) \\
		&=
		\int_{b_{\mathbb{R}^d}(0,r)}g_0(\|v\|)\,dv.
	\end{align*}
	This proves the result.
\end{proof}

Definition \ref{supp-def:Kg} and Proposition \ref{supp-prop:K:g:relation} naturally extends Ripley's $K$-function for point processes on a Euclidean surface to general Riemannian manifolds. Additionally, by the Campbell Theorem we can also define the $K$-function with respect to $\rg$ in terms of the reduced Palm process $X^{!}_{s}$ as,
\begin{equation}
	\label{supp-eq:Kpalm}
	K_{\cM,\rg}(r) =\mathbb{E}\mathop{\sum\nolimits\sp{\ne}}_{t\in X^!_{s}}\frac{1[t\in b_{\cM}(s,r)]}{\theta_{s}(t)\lambda(t)},
\end{equation}
where the right hand side holds for any $s\in\cM$.

\begin{corollary}\label{supp-corr:K:homo:PPP}
	Let $X$ be a Poisson process on $d$-dimensional manifold $(\cM,\rg)$ with positive and locally integrable intensity function $\lambda(\cdot)$. Then 
	\begin{equation*}
		K_{\cM,\rg}(r) = \omega_d r^d,
	\end{equation*}
	where $\omega_d=\pi^{d/2}/\Gamma(1+d/2)$, i.e $\omega_d$ is the volume of a $d$-dimensional Euclidean ball.
\end{corollary}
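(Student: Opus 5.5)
The plan is to deduce the corollary directly from Proposition~\ref{supp-prop:K:g:relation}. I would first check that a Poisson process $X$ on $(\cM,\rg)$ with positive, locally integrable intensity $\lambda$ meets the hypotheses of that proposition. The key fact is that its pair correlation function is identically one. By the Slivnyak--Mecke theorem (the Campbell--Mecke formula with $X^!_s \stackrel{d}{=} X$), the second order factorial moment measure of a Poisson process is $\lambda(s)\lambda(t)\,d\mathrm{vol}(s)\,d\mathrm{vol}(t)$. Hence $\lambda^{(2)}(s,t)=\lambda(s)\lambda(t)$ and $g\equiv 1$. Then $g(s,t)=g_0(d_{\cM,\rg}(s,t))$ with $g_0\equiv 1$, so $X$ is $d_{\cM,\rg}$-correlated in the sense of Definition~\ref{supp-def:d-correlated}, as already remarked after that definition.

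Next I would apply Proposition~\ref{supp-prop:K:g:relation} with $g_0\equiv 1$. For $0<r<r^*_{\cM,\rg}$ this gives
\[ K_{\cM,\rg}(r)=\int_{b_{\mathbb{R}^d}(0,r)} 1\, dv = \mathrm{Leb}_d\bigl(b_{\mathbb{R}^d}(0,r)\bigr). \]
It remains to recall the standard volume of the Euclidean $d$-ball, $\pi^{d/2}r^d/\Gamma(1+d/2)=\omega_d r^d$. One route is polar coordinates: integrate $\rho^{d-1}$ against the surface area $2\pi^{d/2}/\Gamma(d/2)$ of the unit sphere, then use $\Gamma(1+d/2)=(d/2)\Gamma(d/2)$. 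Another is the Gaussian integral trick. The value $r=0$ is trivial, since the ball then has measure zero and $K_{\cM,\rg}(0)=0$.

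No step should be a real obstacle. The only point needing care is the range of $r$. The corollary does not state a restriction, but the definition and the proposition are only valid for $r<r^*_{\cM,\rg}$, so I would state the result on that range. On that range $\exp_s$ is a diffeomorphism of $b_{T_s}(0,r)$ onto $b_{\cM}(s,r)$, and the factor $\theta_s(t)$ exactly cancels the Jacobian. As a sanity check I would also note the direct route via \eqref{supp-eq:Kpalm}: for a Poisson process $X^!_s\stackrel{d}{=}X$, and the Campbell formula gives $\int_{b_{\cM}(s,r)} \theta_s(t)^{-1}\,d\mathrm{vol}(t)=\int_{b_{T_s}(0,r)} dv=\omega_d r^d$. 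This agrees with the formula above and with the computation $\EE\hat K_s(r)=\pi r^2$ for $d=2$ in Section~\ref{supp-sec:intro}.
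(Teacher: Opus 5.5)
Your proposal is correct and follows the paper's own argument. The paper's proof is the same one-liner: a Poisson process has $g\equiv 1$, so it is $d_{\cM,\rg}$-correlated with $g_0\equiv 1$, and Proposition~\ref{supp-prop:K:g:relation} then reduces $K_{\cM,\rg}(r)$ to the Lebesgue volume of $b_{\mathbb{R}^d}(0,r)$. Your remark that the identity is only established for $r<r^*_{\cM,\rg}$ is a sensible precision that the paper leaves implicit.
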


\begin{proof}
	This follows immediately since $g_0(d_{\cM,\rg}(s,t))=g(s,t)=1$ for any Poisson process.
\end{proof}

\begin{remark}
	Note that in the case of $X$ being a homogeneous $d_{\mathcal M,\rg}$-correlated process, the intensity function $\lambda(\cdot)$ in Definition \ref{supp-def:Kg}, Proposition \ref{supp-prop:K:g:relation} and (\ref{supp-eq:Kpalm}) can be replaced with the constant intensity $\lambda>0$.
\end{remark}

\bibliographystyle{apalike}
\bibliography{surface}

\end{document}